\definecolor{myurlcolor}{rgb}{0,0,0.7}
\newcommand*{\Scale}[2][4]{\scalebox{#1}{$#2$}}%
\newtheorem{theorem}{Theorem}
\newtheorem{definition}{Definition}
\newtheorem{proposition}{Proposition}
\newtheorem{lemma}{Lemma}
\newcommand{\iden}{\mathbb{I}}
\newcommand{\spanned}{\mathrm{span}}
\newcommand{\mH}{\mathcal{H}}
\newcommand{\mL}{\mathcal{L}}
\newcommand{\mT}{\mathcal{T}}
\newcommand{\mS}{\mathcal{S}}
\newcommand{\mB}{\mathcal{B}}
\newcommand{\mR}{\mathcal{R}}
\newcommand{\mC}{\mathcal{C}}
\newcommand{\mV}{\mathcal{V}}
\begin{document}


\title{Masking quantum information on hyperdisks}


\author{Feng Ding}
\author{Xueyuan Hu}
\email{xyhu@sdu.edu.cn}

\affiliation{School of Information Science and Engineering, Shandong University, Qingdao 266237, China}


\date{\today}

\begin{abstract}
    Masking information is a protocol that encodes quantum information into a bipartite entangled state while the information is completely unknown to local systems.
    This paper explicitly studies the structure of the set of maskable states and its relation to hyperdisks.
    We prove that although the qubit states which can be masked must locate on a single hyperdisk, the set of maskable states can consist of two or more hyperdisks for high-dimensional cases.
    Our results may shed light on several research fields of quantum information theory, such as the structure of entangled states and the local discrimination of bipartite states.
\end{abstract}

\pacs{03.67.-a, 03.67.Bg, 03.67.Ac}

\maketitle

\section{Introduction}
There are a variety of no-go theorems that characterize the intrinsic gap between classical and quantum information,
such as the no-cloning theorem \cite{cloning}, the no-deleting theorem \cite{Kumar2000}, and the no-go theorem for creating the superposition of unknown states \cite{superposition}.
A branch of no-go theorems is related to entanglement such as the no-hiding theorem \cite{hiding}.

Recently, the authors of Ref. \cite{masking} proposed a masking quantum information protocol,
which encodes quantum information into a bipartite entangled system, while the information is completely unknown to local systems.
They derived a new no-go theorem called the no-masking theorem, which claims that although one can encode classical information into entanglement, masking arbitrary quantum states is impossible.
Still, one can go beyond the classical world and mask a set of nonorthogonal quantum states into bipartite states.
Furthermore, Ref. \cite{tri} generalized the protocol and proved that it is possible to mask full quantum information into multipartite systems, Ref. \cite{PhysRevA.99.052343} developed a probabilistic masking protocol, and Ref. \cite{PhysRevA.100.030304} gave a characterization of maskable qubit states.

The structure of maskable states helps us to gain better understanding of the classification of high-dimensional entangled states \cite{RevModPhys.81.865,belltest15}.
Because the bipartite entangled target states are fully indistinguishable by two participants who are forbidden to communicate, the task of masking information is related to the research on the local discrimination task \cite{PhysRevA.91.052314,PhysRevA.98.022304}.
Notice that masking information can be viewed as a quantum secret sharing scheme \cite{PhysRevA.59.162,PhysRevA.59.1829,PhysRevA.61.042311},
so it is significant to study the structure of maskable states as the shareable quantum secrets.
Since it is impossible to mask all the quantum states, the authors of Ref. \cite{masking} designed a masker using the generalized controlled-NOT gate.
Based on this masker, they proposed a hyperdisk conjecture, which said that any set of maskable states must live in some disk.

In this paper, we prove that the hyperdisk conjecture holds for the qubit case, while it fails for the higher-dimensional case.
For this purpose, we first give a clear definition of the hyperdisk and introduce some related concepts.
Then we study the classification of the masking protocol, depending on the dimension $n$ of the input space, the Schmidt number $d$ of target states, and the degeneracy of marginal states.
General methods are provided to derive the structure of maskable states in different cases.
Based on these methods, we show that the maskable states may live in two or more different hyperdisks if $n\geq 3$.
Full characterizations of the sets of maskable states for $n=2,d\geq 2$ and for $n=3,d=3$ are given in the last section.

\section{Hyperdisk and related concepts}
Let $\mH$ be an $n$-dimensional Hilbert space and let $\mB=\left\{\ket{\phi_j}\right\}^{m-1}_{j=0}$ be an orthonormal basis of an $m$-dimensional subspace of $\mH$.
We introduce a real vector $\bm{r}_\mB$ for each pure state $\ket{\psi}\in\mH$ as follows:
\begin{equation}\label{eq:rv}
    \bm{r}_\mB(\ket{\psi})=\left(\left|\bra{\phi_0}\ket{\psi}\right|,...,\left|\bra{\phi_{m-1}}\ket{\psi}\right|\right)^\mathrm{T}.
\end{equation}
Notice that $\bm{r}_\mB(\ket{\Psi})$ is normalized if and only if $\ket{\psi}\in\spanned\{\mB\}$.

\begin{definition}[hyperdisk]
    Let $\mS$ be a set of pure states in an $n$-dimensional Hilbert space $\mH$.
    Then $\mS$ is a hyperdisk if there is a complete orthonormal basis $\mathcal{B}$ of $\mV_\mS:=\spanned\{\mathcal{S}\}$ such that
    \begin{eqnarray}
        \bm{r}_\mB(\ket{\psi})=\bm{r},&&\quad\forall\ket{\psi}\in\mS,\\
        \bm{r}_\mathcal{B}\left(\ket{\xi}\right)\neq\bm{r},&&\quad\forall\ket{\xi}\in \mH\setminus\mS.
    \end{eqnarray}
    where $\bm{r}$ is a constant vector with strictly positive entries.
\end{definition}

Here we call $\mB$ the hyperdisk basis, $m=\mathrm{dim}(\mV_\mS)$ the dimension of the hyperdisk, and $\bm{r}$ the coefficient vector.
In the following, we will use the mathcal typeface $\mathcal{X}$ to denote a set of pure states and $\mV_\mathcal{X}$ to denote the subspace spanned by $\mathcal{X}$, i.e., $\mV_\mathcal{X}:=\spanned\{\mathcal{X}\}$. Also, the dimension of $\mathcal{X}$ is labeled as $\mathrm{dim}\left(\mV_\mathcal{X}\right)$.

In the trivial case with $m=1$, a hyperdisk consists of only one pure state.
For $m=2$, a hyperdisk can be expressed as
\begin{equation}
    \left\{\ket{\psi(\theta)}=a\ket{\phi_0}+be^{i\theta}\ket{\phi_1}|\theta\in\mathbb{R}\right\},
\end{equation}
where $\left\{\ket{\phi_0},\ket{\phi_1}\right\}$ is the hyperdisk basis and $a,b$ are positive real numbers.
In the Bloch representation, a two-dimensional hyperdisk can be visualized as an intersection of the sphere and a complex plane.
The plane is orthogonal to the crossing line of antipodal points $\ket{\phi_0}$ and $\ket{\phi_1}$ (Fig.\ref{fg:bl}).
Furthermore, in general, any pure state in an $m$-dimensional hyperdisk $\mathcal{S}$ can be written as
\begin{equation}\label{eq:dh}
    \ket{\psi\left(\bm{\theta}\right)}=\sum^{m-1}_{j=0}r_j e^{i\theta_j}\ket{\phi_j},\quad \bm{\theta}\in\mathbb{R}^m,
\end{equation}
where $\mB=\{\ket{\phi_j}\}$ is the hyperdisk basis of $\mS$. In the following, we will use the expression as in Eq. (\ref{eq:dh}) to represent a set of pure states, i.e., $\ket{\psi\left(\bm{\theta}\right)}$ denotes the set of states $\left\{\ket{\psi(\bm{\theta})}|\bm{\theta}\in \mathbb{R}^m\right\}$.
\begin{figure}[htbp]
    \centering
    \includegraphics[height=4.5cm, width=4.75cm]{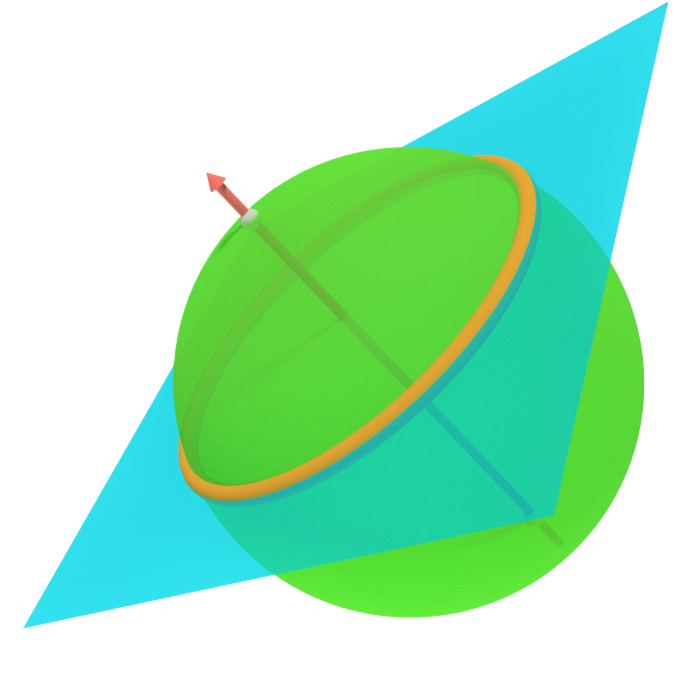}
    \caption{An illustration of the geometric interpretation of a two-dimensional hyperdisk. A two-dimensional hyperdisk is a circle on the Bloch sphere, which is represented as the intersection between the Bloch sphere and a plane. The cross line between the states in hyperdisk basis is perpendicular to the plane. }
    \label{fg:bl}
\end{figure}

Every hyperdisk satisfies the following properties.

\noindent [\textbf{Property 1}]\label{pty:1}.
One can generate all states in hyperdisk $\mS=\left\{\ket{\psi(\bm{\theta})}\right\}_{\bm{\theta}}$ by applying a set of commutative unitary operators $\left\{U(\bm{\theta})\right\}_{\bm{\theta}}$ on an arbitrary fixed state $\ket{\psi_0}\in\mS$, i.e., 
\begin{equation}\label{cmd}
        \ket{\psi(\bm{\theta})}=U(\bm{\theta})\ket{\psi_0},
\end{equation}
where $[U(\bm{\theta}),U(\bm{\theta}')]=0,\forall\bm{\theta}\neq\bm{\theta}'$.

In order to prove this property, we construct the set of commutative unitaries in Eq. (\ref{cmd}) as
\begin{equation}
    U(\bm{\theta})=\sum^{m-1}_{j=0}e^{i\theta_j}\ket{\phi_j}\bra{\phi_j}+\Pi^\perp,
\end{equation}
where $\left\{\ket{\phi_j}\right\}_j$ is the hyperdisk basis of $\mS$ and $\Pi^\perp$ is the projection operator to the orthogonal complement space of $\mV_\mS$.

\noindent [\textbf{Property 2}]\label{pty:2}. A linear isometry $V:\mH\rightarrow\mH^\prime$ preserves the hyperdisk structure. 

Notice that the condition $\mathrm{dim}(\mH^\prime)\geq \mathrm{dim}(\mH)$ is implied from the definition of linear isometry.
In the following, we prove this property. 
Let $\mS$ be a hyperdisk in $\mH$ and let $\mB=\left\{\ket{\phi_j}\right\}_j$ be the hyperdisk basis of $\mS$.
After the action of $V$, each state $\ket{\psi}\in\mS$ becomes $\ket{\psi'}=V\ket{\psi}\in\mS'$, and the hyperdisk basis $\mB$ becomes $\mB^\prime=\left\{V\ket{\phi_j}\right\}_j$. 
Clearly, $\mB^\prime$ is a set of orthonormal states.
For any state  $\ket{\psi'}\in\mS'$, the real vector 
\begin{equation*}
    \begin{aligned}
    \bm{r}_{\mathcal{B}^\prime}(\ket{\psi'})&=\left(\left|\bra{\phi_0}V^\dagger V\ket{\psi}\right|,...,\left|\bra{\phi_{m-1}}V^\dagger V\ket{\psi}\right|\right)^\mathrm{T}\\
    &=\left(\left|\bra{\phi_0}\ket{\psi}\right|,...,\left|\bra{\phi_{m-1}}\ket{\psi}\right|\right)^\mathrm{T}\\
    &=\bm{r}_{\mathcal{B}}(\ket{\psi})
    \end{aligned}
\end{equation*}
is constant.
It follows that $\mS'\subset\mH'$ is a hyperdisk with the same dimension and the coefficient vector as $\mS$.

\noindent [\textbf{Property 3}]\label{pty:3}. Every pair of pure states lives in some hyperdisk.

Notice that any pair of states spans a two-dimensional qubit space.
Geometric interpretation (Fig.\ref{fg:bl}) shows that every two points on the Bloch sphere can live in the same hyperdisk.

For further discussion on the masking protocol in Section III, the hyperdisks in the space of the bipartite system should be taken into consideration, which leads us to the concept of the Schmidt hyperdisk.
\begin{definition}[Schmidt hyperdisk]\label{def:sh}
    A Schmidt hyperdisk $\mS^{AB}$ in Hilbert space $\mH_{AB}=\mH_A^{(d)}\otimes\mH_B^{(d)}$ is expressed as
    \begin{equation}\label{eq:schmhyper}
        \ket{\Psi(\bm{\theta})}=\sum_{j=0}^{d-1}r_j e^{i\theta_j}\ket{\phi^A_j\phi^B_j},\ \bm{\theta}\in\mathbb{R}^d,
    \end{equation}
    where $r_j\neq0$, and $\left\{\ket{\phi^{A,B}_j}\right\}_{j=0}^{d-1}$ is an orthonormal basis of $\mH_{A,B}^{(d)}$.
\end{definition}
The state in Eq. (\ref{eq:schmhyper}) is in the Schmidt decomposition form \cite{10uni}. The set $\left\{\ket{\phi^A_j\phi^B_j}\right\}_{j=0}^{d-1}$ is usually called a Schmidt basis. Here we stress that a hyperdisk is a Schmidt hyperdisk only if its basis is a Schmidt basis. For example, the hyperdisk $\ket{\Psi^\prime(\theta)}=\frac{1}{\sqrt{3}}[\ket{00}+e^{i\theta}(\ket{11}+\ket{22})]$ is not a Schmidt hyperdisk, because its basis is $\left\{\ket{00},\frac{1}{\sqrt2}(\ket{11}+\ket{22})\right\}$ and $\frac{1}{\sqrt2}(\ket{11}+\ket{22})$ does not belong to a Schmidt basis. Nevertheless, $\ket{\Psi^\prime(\theta)}$ is a subset of the Schmidt hyperdisk $\ket{\Psi(\theta_1,\theta_2)}=\frac{1}{\sqrt{3}}(\ket{00}+e^{i\theta_1}\ket{11}+e^{i\theta_2}\ket{22})$.
This leads us to the concept of subhyperdisk. This concept helps us to explore the sub-structures of a hyperdisk.

\begin{definition}[subhyperdisk]
    Let $\mS$ be a hyperdisk. A subset $\mS^\prime\subseteq\mS$ is a subhyperdisk of $\mS$ , if $\mS^\prime$ is also a hyperdisk.
\end{definition}

Here, we derive the general form of a subhyperdisk. Consider an $m$-dimensional hyperdisk $\mS$ with basis $\mB=\{\ket{\phi_j}\}_j$ and coefficient vector $\bm{r}$ (with $j$th entry $r_j$), and an $m^\prime$-dimensional hyperdisk $\mS^\prime$ ($m'\leq m$) with basis $\mB'=\{\ket{\phi'_k}\}_k$ and coefficient vector $\bm{r'}$ (with $k$th entry $r'_k$).
We define a Gramian matrix as
\begin{equation}
    G_{jk}=r'_k\bra{\phi_j}\ket{\phi^\prime_k}.
\end{equation}
If $\mS'$ is a subhyperdisk of $\mS$, then each row of $G$ has exactly one nonzero entry with absolute value $r_j$ and each column of $G$ contains at least one nonzero entry. The reason is as follows.
Any state $\ket{\psi^\prime(\bm{\theta})}\in\mS^\prime$ is expressed as $\ket{\psi^\prime(\bm{\theta})}=\sum_{k=0}^{m'-1}r'_k e^{i\theta_k}\ket{\phi'_k}$.
Because $\ket{\psi^\prime(\bm{\theta})}\in\mS^\prime\subseteq\mS$,
we have 
\begin{equation}
    r_j=|\bra{\phi_j}\ket{\psi^\prime(\bm{\theta})}|=\left|\sum^{m^{\prime}-1}_{k=0} e^{i\theta_k} G_{jk}\right|,\quad\forall \bm{\theta}\in\mathbb{R}^{m^\prime}.
\end{equation}
The summation has constant absolute value for all $\bm{\theta}$ only if there is exactly one nonzero term in $\left\{G_{jk}\right\}_k$. Further, because $\ket{\phi'_k}\in\mV_{\mS'}\subseteq\mV_\mS=\spanned\left\{\ket{\phi_j}\right\}_j$ and $r'_k\neq0$, there is at least one nonzero entry in the column of $G$.

We now turn to the relation between subhyperdisks. To determine whether two subhyperdisks are contained in a single hyperdisk, we give the proposition below as a criterion.
\begin{proposition}\label{pp:n}
    Suppose there are two $n$-dimensional hyperdisks $\mS_0$ and $\mS_1$ with hyperdisk bases $\mB_0$ and $\mB_1$, respectively, where $m\geq 2$. If $\mS_0$ and $\mS_1$ are subsets of a single $(m+1)$-dimensional hyperdisk, then there exist two states $\ket{\phi^0_k}\in\mB_0$ and $\ket{\phi^1_l}\in\mB_1$ such that 
    \begin{equation}\label{eq:qqq}
        \bra{\phi^0_k}\ket{\phi^1_l}= 0.
    \end{equation}
\end{proposition}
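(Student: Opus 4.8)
The plan is to exploit the subhyperdisk normal form developed just above. Since $\mathcal{S}_0$ and $\mathcal{S}_1$ are hyperdisks contained in the single $(m+1)$-dimensional hyperdisk $\mathcal{S}$, they are subhyperdisks of $\mathcal{S}$; write $\mathcal{B}=\{\ket{\chi_j}\}_{j=0}^{m}$ for the hyperdisk basis of $\mathcal{S}$. For each $i\in\{0,1\}$ I would apply the Gramian-matrix argument to the pair $(\mathcal{S},\mathcal{S}_i)$: the associated $(m+1)\times m$ matrix $G^{(i)}$ has exactly one nonzero entry in every row and at least one nonzero entry in every column. Since the $m+1$ nonzero entries are spread over $m$ nonempty columns, exactly one column must carry two of them and each of the remaining $m-1$ columns carries exactly one. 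Feeding this back into $\ket{\phi^i_k}=\sum_j\bra{\chi_j}\ket{\phi^i_k}\ket{\chi_j}$ and using that $\mathcal{B}_i$ is orthonormal, I obtain the normal form: there is an $(m-1)$-element set $A_i\subset\{0,\dots,m\}$ such that $m-1$ of the vectors of $\mathcal{B}_i$ equal $\ket{\chi_j}$, $j\in A_i$, up to phases, while the one remaining vector $\ket{\phi^i_{k_i}}$ is a genuine two-term superposition supported on $\{\ket{\chi_p},\ket{\chi_q}\}$ with $\{p,q\}=\{0,\dots,m\}\setminus A_i$.

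With this normal form the conclusion follows from a short dichotomy on the index sets $A_0,A_1$, both nonempty because $|A_i|=m-1\geq1$. If one can pick $j\in A_0$ and $j'\in A_1$ with $j\neq j'$, then the element of $\mathcal{B}_0$ proportional to $\ket{\chi_j}$ is orthogonal to the element of $\mathcal{B}_1$ proportional to $\ket{\chi_{j'}}$, giving Eq.~(\ref{eq:qqq}). The only way no such pair exists is $A_0=A_1=\{j^\ast\}$ for a common single index (which forces $m=2$); in that case $\ket{\phi^1_{k_1}}$ is supported on the two basis vectors $\ket{\chi_p}$ with $p\neq j^\ast$, so $\bra{\chi_{j^\ast}}\ket{\phi^1_{k_1}}=0$, and since $\ket{\chi_{j^\ast}}$ is, up to a phase, the remaining element of $\mathcal{B}_0$, Eq.~(\ref{eq:qqq}) holds again.

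The main obstacle is the second step: extracting the precise normal form of $\mathcal{B}_i$ from the Gramian conditions. The points needing care are that the $m-1$ single-entry columns give vectors proportional to \emph{distinct} members of $\mathcal{B}$, that the doubly occupied column really produces a two-term superposition with both coefficients nonzero, and that its two support indices are exactly the complement of $A_i$ --- each of these uses orthonormality of $\mathcal{B}_i$. Once the normal form is in place the dichotomy is immediate and the phase factors drop out of every inner product. (Here I read $\mathcal{S}_0,\mathcal{S}_1$ as $m$-dimensional, consistent with the notation in the statement.)
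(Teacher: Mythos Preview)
Your proposal is correct and follows essentially the same approach as the paper: both derive the same normal form for $\mathcal{B}_i$ from the Gramian matrix (one two-term superposition plus $m-1$ vectors proportional to distinct members of $\mathcal{B}$) and then locate an orthogonal pair by a short case analysis. Your index-set dichotomy on $A_0,A_1$ is a slightly cleaner organization of the endgame than the paper's three cases on $|\langle\phi^0_0|\phi^1_0\rangle|$, but the underlying argument is the same.
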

\begin{proof}
    The Gramian matrix $G^0$ of subhyperdisk $\mS^0$ is a $(m+1)\times m$ matrix, which has exactly one nonzero entry in each row and at least one nonzero entry in each column. By properly ordering the hyperdisk bases $\mB$ and $\mB_0$, the entries $G^0_{jk}=r^0_k\bra{\phi_j}\ket{\phi^0_k}$ are $G^0_{00}=r_0$, $G^0_{10}=r_1e^{i\varphi}$, and $G^0_{j0}=0$ for $j\neq0,1$, and, for $1\leq k\leq m-1$, $G^0_{jk}=r_j\delta_{j,k+1}$.
    It follows that 
    \begin{eqnarray}
        r_0^0\ket{\phi^0_0}&=&r_0\ket{\phi_0}+r_1e^{i\varphi}\ket{\phi_1},\nonumber\\
        \ket{\phi^0_k}&=&\ket{\phi_{k+1}},\ 1\leq k\leq m-1.
    \end{eqnarray}
    
    Similarly, there exists a state in $\mB_1$, labeled as $\ket{\phi^1_0}$, which is a linear combination of two states in $\mB$:
    \begin{equation}
        \ket{\phi^1_0}=c_a\ket{\phi_a}+c_b\ket{\phi_b},
    \end{equation}
    where $c_a,c_b\neq 0$ are complex coefficients, and without loss of generality we set $a<b$. Further, we have $\mB_1\backslash\left\{\ket{\phi_0^1}\right\}=\mB\backslash\left\{\ket{\phi_a},\ket{\phi_b}\right\}$. Notice that these two sets are not null due to $m\geq2$.
    
    \noindent \textbf{[Case 1]} $\left|\bra{\phi^0_0}\ket{\phi^1_0}\right|=0$. Then Eq. (\ref{eq:qqq}) holds for $k=l=0$.

    \noindent \textbf{[Case 2]} $\left|\bra{\phi^0_0}\ket{\phi^1_0}\right|=1$. Then Eq. (\ref{eq:qqq}) holds for $k=0,l\neq0$.

    \noindent \textbf{[Case 3]} $\left|\bra{\phi^0_0}\ket{\phi^1_0}\right|\neq 0,1$. Then $a=0$ or $1$. 
    Suppose $a=0,b=1$; then Eq. (\ref{eq:qqq}) holds for $k=0,l\neq 0$. 
    Suppose $a=0,b>1$; we then have $\ket{\phi_1^0}=\ket{\phi_2}\in\mB_0$  and $\ket{\phi_{l_1}^1}=\ket{\phi_1}\in\mB_1$ such that Eq. (\ref{eq:qqq}) holds for $k=1,l=l_1$. 
    Suppose $a=1,b>1$; we then have $\ket{\phi_1^0}=\ket{\phi_2}\in\mB_0$  and $\ket{\phi_{l_0}^1}=\ket{\phi_0}\in\mB_1$ such that Eq. (\ref{eq:qqq}) holds for $k=1,l=l_0$. 
    
    To sum up, there exist indices $k$ and $l$ for all the three cases such that Eq. (\ref{eq:qqq}) holds.
\end{proof}

In the following, we define the regular subset of a hyperdisk. This concept is essential to the characterization of maskable states of a nondegenerate masking machine, which will be discussed in Sec. \ref{sec:ndc}.
\begin{definition}[regular subset of hyperdisk]\label{ref:rs}
    Let $\mS$ be a hyperdisk. 
    The set $\mC$ is a regular subset of $\mS$ if
    \begin{equation}\label{eq:rd}
        \mV_\mC\cap\mS=\mC,
    \end{equation}
    where $\mV_\mC=\spanned\{\mC\}$.
\end{definition}

Eq. (\ref{eq:rd}) can be interpreted as follows. 
If $\ket{\eta}$ is a complex linear combination of states in $\mC$, i.e., $\ket{\eta}\in\mV_\mC$,
then the condition $\ket{\eta}\in\mS$ is equivalent to $\ket{\eta}\in\mC$.
Notice that every subhyperdisk is also a regular subset.

A key property of the regular subset is 
\begin{equation}\label{eq:reg_eqiv}
    \dim(\mC)=\dim(\mS)\Leftrightarrow\mC=\mS.
\end{equation}
The reason is as follows. The condition $\mC\subseteq\mS$ implies that $\mV_\mC$ is a subspace of $\mV_\mS$. Then $\dim(\mC)=\dim(\mS)$ is equivalent to $\mV_\mC=\mV_\mS$. It is in turn equivalent to $\mC=\mS$, because $\mC=\mV_\mC\cap\mS=\mV_\mS\cap\mS=\mS$.

A general subset $\mathcal{G}$ of hyperdisk $\mS$ can be expressed as
\begin{equation}\label{gc}
    \mathcal{G}=\bigcup_{p\in\mathcal{P}}\mS_p,
\end{equation}
where $\left\{\mS_p|p\in\mathcal{P}\right\}$ is the set of all hyperdisks contained in $\mathcal{G}$.
Notice that this expression is valid because every single pure state forms a one-dimensional hyperdisk.
Nevertheless, this formulation does not limit the number of hyperdisks in $\mathcal{G}$.
If $\mathcal{G}$ consists of a finite number of hyperdisks, we define the optimal cover number of $\mathcal{G}$ as the least number of hyperdisks that we need to fully cover $\mathcal{G}$.
The following lemma implies that the optimal cover number of a two-dimensional regular subset of a general finite dimensional hyperdisk is at most 2.
\begin{lemma}\label{lm:c2}
    A two-dimensional regular subset of a general finite-dimensional hyperdisk is either a set of two distinct pure states or a two-dimensional hyperdisk.
\end{lemma}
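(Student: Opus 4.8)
The plan is to reduce the statement to an explicit computation inside the two-dimensional subspace $\mV_\mC$. Since $\mathrm{dim}(\mV_\mC)=2$, the set $\mC$ contains at least two distinct pure states, as a single state spans only a line. If it contains exactly two, they are distinct and we are in the first alternative; so assume from now on that $\mC$ has three or more elements. By the regularity condition Eq.~(\ref{eq:rd}), $\mC=\mV_\mC\cap\mS$; writing $W:=\mV_\mC$, a two-dimensional subspace of $\mV_\mS$, it then suffices to prove that $W\cap\mS$ is a two-dimensional hyperdisk whenever it has at least three elements.

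I would fix two distinct states $\ket{\alpha},\ket{\beta}\in\mC$ and expand them in the hyperdisk basis $\mB=\{\ket{\phi_j}\}_j$ of $\mS$, with strictly positive coefficient vector $\bm r$: $\ket{\alpha}=\sum_j r_j e^{ia_j}\ket{\phi_j}$ and $\ket{\beta}=\sum_j r_j e^{ib_j}\ket{\phi_j}$. For $\ket{\psi}=\lambda\ket{\alpha}+\mu\ket{\beta}\in W$ one has $\langle\phi_j|\psi\rangle=r_j(\lambda e^{ia_j}+\mu e^{ib_j})$, and since $\ket{\psi}\in\mV_\mS$ and $\sum_j r_j^2=1$, the normalization $\sum_j|\langle\phi_j|\psi\rangle|^2=1$ is automatic as soon as $|\lambda e^{ia_j}+\mu e^{ib_j}|=1$ for every $j$. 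Hence
\begin{equation*}
  W\cap\mS=\bigl\{\lambda\ket{\alpha}+\mu\ket{\beta}\ :\ |\lambda|^2+|\mu|^2+2\,\mathrm{Re}(\bar\lambda\mu\,e^{i\delta_j})=1\ \ \forall j\bigr\},\qquad \delta_j:=b_j-a_j,
\end{equation*}
and everything now turns on the relative phases $\{\delta_j\}$.

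The heart of the argument is a trichotomy on the number of distinct values among the $\delta_j$. If there is a single value $\delta$, then $\ket{\beta}=e^{i\delta}\ket{\alpha}$, contradicting distinctness. If there are three or more, the defining equations force $\mathrm{Re}(\bar\lambda\mu\,e^{i\delta_j})$ to be one and the same real number for all $j$; writing $\bar\lambda\mu=\rho e^{i\gamma}$ with $\rho>0$ in the nonzero case, this says $\cos(\gamma+\delta_j)$ is constant over $j$, which is impossible at three distinct arguments, so $\bar\lambda\mu=0$ and $W\cap\mS=\{\ket{\alpha},\ket{\beta}\}$ --- contradicting $|\mC|\ge 3$. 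The surviving case is exactly two values $\delta,\delta'$: I would split the index set into the nonempty blocks $J=\{j:\delta_j=\delta\}$ and $K=\{j:\delta_j=\delta'\}$, put $A=\bigl(\sum_{j\in J}r_j^2\bigr)^{1/2}$ and $B=\bigl(\sum_{j\in K}r_j^2\bigr)^{1/2}$, and define the orthonormal pair $\ket{u}=\tfrac1A\sum_{j\in J}r_je^{ia_j}\ket{\phi_j}$, $\ket{v}=\tfrac1B\sum_{j\in K}r_je^{ia_j}\ket{\phi_j}$. Then $A^2+B^2=1$, $\ket{\alpha}=A\ket{u}+B\ket{v}$, and $\ket{\beta}=e^{i\delta}A\ket{u}+e^{i\delta'}B\ket{v}$ (since $\beta_j=e^{i\delta_j}\alpha_j$), so $W=\spanned\{\ket{u},\ket{v}\}$. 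Evaluating $|\langle\phi_j|\psi\rangle|=r_j$ on $\ket{\psi}=x\ket{u}+y\ket{v}$, and using $\langle\phi_j|v\rangle=0$ for $j\in J$ together with $\langle\phi_j|u\rangle=0$ for $j\in K$, the constraints collapse to $|x|=A$ and $|y|=B$; hence $W\cap\mS=\{A e^{i\sigma}\ket{u}+B e^{i\tau}\ket{v}\}$, which by Eq.~(\ref{eq:dh}) is precisely a two-dimensional hyperdisk with basis $\{\ket{u},\ket{v}\}$ and coefficient vector $(A,B)$; its maximality follows because $A^2+B^2=1$ forces any state of $\mH$ having these two moduli to lie in $W$ and to take this form. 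Therefore $\mC=W\cap\mS$ is a two-dimensional hyperdisk.

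I expect the main obstacle to sit in the middle of this trichotomy: the productive move is to pass from the subspace $W$ to the relative phases $\delta_j$, to notice that their distinct values partition the hyperdisk basis $\mB$, and to see that the block decomposition of $\mB$ manufactures the hyperdisk basis $\{\ket{u},\ket{v}\}$ of the answer. Once this picture is in place, excluding the case of $\ge 3$ distinct values needs only the elementary fact that a cosine attains each value at most twice per period.
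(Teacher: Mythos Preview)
Your proof is correct and follows essentially the same route as the paper's: write an arbitrary element of $\mV_\mC$ as a combination of two fixed states in $\mC$, impose the modulus constraints to get $|\lambda|^2+|\mu|^2+2\rho\cos(\gamma+\delta_j)=1$, and case-split on whether this can hold with $\lambda\mu\neq 0$ (forcing the $\delta_j$ to take only two values and yielding the two-dimensional subhyperdisk with basis $\{\ket{u},\ket{v}\}$) or not (leaving only the original pair). The only cosmetic difference is that you organise the dichotomy by counting distinct relative phases $\delta_j$ up front, whereas the paper phrases its Case~1/Case~2 split in terms of whether $\cos(\varphi+\theta_j)$ is constant for a given candidate state; the content is identical.
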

\begin{proof}
    Let $\mS$ be an $m$-dimensional hyperdisk in the form of Eq. (\ref{eq:dh}) and let $\mC$ be its two-dimensional regular subset. By definition, there are at least two states in $\mC$. Without loss of generality, these two states can be written as  
    \begin{equation}\label{eq:lc}
        \ket{\psi_0}=\sum^{m-1}_{j=0}r_j\ket{\phi_j},\quad \ket{\psi_1}=\sum^{m-1}_{j=0}r_je^{i\theta_j}\ket{\phi_j},
    \end{equation}
    where $\mB=\left\{\ket{\phi_j}\right\}_j$ is the hyperdisk basis of $\mS$, and the phases $\theta_j$ are not equal (because otherwise we would have $\ket{\psi_0}=\ket{\psi_1}$ up to a phase factor). Notice that $\mV_\mC=\spanned\left\{\ket{\psi_0},\ket{\psi_1}\right\}$.
    Then any state in $\mC$ can be expressed as 
    \begin{equation}\label{eq:lcab}
        \ket{\psi}=a\ket{\psi_0}+be^{i\varphi}\ket{\psi_1},
    \end{equation}
    where $a\geq 0,b\geq 0$ and $\varphi$ are chosen such that $\ket{\psi}$ is normalized.
    Recalling $\ket{\psi}\in\mC\subseteq\mS$, we get 
    \begin{equation}\label{eq:21}
        \left|\bra{\phi_j}\ket{\psi}\right|= r_j,\quad\forall j.
    \end{equation}
    By substituting Eqs. (\ref{eq:lc}) and (\ref{eq:lcab}) into Eq. (\ref{eq:21}), we arrive at
    \begin{equation}\label{eq:22}
        a^2+b^2+2ab\cos(\varphi+\theta_j)= 1,\quad \forall j.
    \end{equation}
    
        \noindent\textbf{[Case 1]} There exists $j\neq j^\prime$ such that $\cos(\varphi+\theta_j)\neq\cos(\varphi+\theta_{j^\prime})$. Then the only two solutions to Eq. (\ref{eq:22}) are $a=0,b=1$ and $a=1,b=0$. This means that any state in $\mC$ is either $\ket{\psi_0}$ or $\ket{\psi_1}$, i.e., $\mC=\{\ket{\psi_0},\ket{\psi_1}\}$. In this case, $\mC$ consists of exactly two pure states.

        \noindent\textbf{[Case \ 2]} $\cos(\varphi+\theta_j)=\cos(\eta)$ for all $j$, where $\eta$ is a constant parameter. Without loss of generality, we set $\varphi+\theta_j\in[-\pi,\pi)$ and $\eta\in[0,\pi]$, and then
        \begin{equation}
            \varphi+\theta_j=\pm\eta.
        \end{equation}
    Further, because it is required that $\theta_j$ are not equal, we have $\eta\neq0,\pi$.
    Thus $\ket{\psi_1}$ can be reformulated as 
    \begin{align}
        \begin{aligned}
        e^{i\varphi}\ket{\psi_1}&=e^{-i\eta}\sum_{j:\theta_j+\varphi=-\eta}r_j\ket{\phi_j}+e^{i\eta}\sum_{j:\theta_j+\varphi=\eta}r_j\ket{\phi_j}\\
        &=e^{-i\eta}r'_-\ket{\phi^\prime_-}+e^{i\eta}r'_+\ket{\phi^\prime_+},
        \end{aligned}\raisetag{20pt}
    \end{align}
    where $r'_\pm\ket{\phi^\prime_\pm}\equiv\sum_{j:\theta_j+\varphi=\pm\eta}r_j\ket{\phi_j}\neq0$.
    Similarly, we have $\ket{\psi_0}=r'_-\ket{\phi_-^\prime}+r'_+\ket{\phi_+^\prime}$.  
    Therefore, $\mV_\mC=\spanned\{\ket{\phi^\prime_-},\ket{\phi^\prime_+}\}$.
    Using the condition $\mC=\mV_C\cap\mS$, we get
    \begin{equation}
        \mC=\left\{\ket{\psi(\theta)}=r'_-\ket{\phi^\prime_-}+e^{i\theta}r'_+\ket{\phi^\prime_+}|\theta\in\mathbb{R}\right\}.
    \end{equation}
    This means that $\mC$ is a two-dimensional subhyperdisk of $\mS$. 
\end{proof}

\section{Masking Information Protocol}
A masking information protocol involves three participants: a referee $R$ and two players $A$ and $B$.
Each of them holds a system with Hilbert spaces $\mH_R$, $\mH_A$, and $\mH_B$, respectively.
In every round of the protocol, the referee randomly chooses a pure state $\ket{\psi}$ in the set of maskable states $\mR\subset\mH_R$,
and loads $\ket{\psi}$ into a masking machine.
\begin{definition}[masking machine]\label{def:masking}
    Let $\mR$ be a set of states in $\mH_R$.
    A masking machine for $\mR$ is a linear isometry $V_\mathrm{mask}:\mH_R\rightarrow\mH_A\otimes\mH_B$ which satisfies the following two conditions.
    
    $\textnormal{(1) }$  $\forall\ket{\psi}\in\mathcal{R}$; then the marginal states of $\ket{\Psi}=V_{\mathrm{mask}}\ket{\psi}$ read
    \begin{equation}
        \Tr_B(\ket{\Psi}\bra{\Psi})=\rho_A\ \textnormal{and}\ \Tr_A(\ket{\Psi}\bra{\Psi})=\rho_B,
    \end{equation}
    where $\rho_A$ and $\rho_B$ are independent of $\ket{\psi}$.
    
    $\textnormal{(2) }$ $\forall\ket{\psi'}\notin\mathcal{R}$; then the marginal states of $\ket{\Psi'}=V_{\mathrm{mask}}\ket{\psi'}$ satisfy
    \begin{equation}
        \Tr_B(\ket{\Psi'}\bra{\Psi'})\neq\rho_A\ \textnormal{or} \ \Tr_A(\ket{\Psi'}\bra{\Psi'})\neq\rho_B.
    \end{equation}
    Here, the set $\mR$ is called the set of maskable states. The set $\mT\equiv V_{\mathrm{mask}}\mR\subset\mH_A\otimes\mH_B$ is called the set of target states.
    The dimension of input space is denoted by $n\equiv\dim\left(\mH_R\right)$. 
    The rank of marginal states is denoted by $d\equiv\rank\left(\rho_{A/B}\right)$.
\end{definition}

Here we mention that the masking machine in Ref. \cite{masking} is defined as a bipartite unitary transformation $\ket{\Psi}=U_\mathrm{mask}\ket{\psi}\ket{s}$,
where $\ket{s}$ is a fixed state of  the auxiliary system $\mH_S$.
This is a special case of our definition with $V_\mathrm{mask}=U_\mathrm{mask}\iden_R\otimes\ket{s}$. The advantage of our definition is that we require fewer parameters to fully describe a masking machine.


Our main task in this paper is to study the structure of $\mR$, as well as its relation to and difference from hyperdisks. Because $\mR$ is isomorphic to $\mT$, the rest of this paper will be focused on the structure of $\mT$. Here we first give some notations.

Without loss of generality, we set $\mH_R=\spanned\{\mR\}$, because $\mR$ contains all of the pure states that can be masked and states not in $\spanned\{\mR\}$ are irrelevant.  From the isomorphic relation between $\mR$ and $\mT$, $\mV_\mT\equiv\spanned\{\mT\}$ is isomorphic to $\mH_R$, and hence $\dim(\mT)= n$.

When the marginal states $\rho_A$ and $\rho_B$ are fixed, the set of legal states $\mL$ is defined as the set of all bipartite pure states with marginal states $\rho_A$ and $\rho_B$, i.e., $\mL=\{\ket{\Psi}|\Tr_B(\ket{\Psi}\bra{\Psi})=\rho_A,\Tr_A(\ket{\Psi}\bra{\Psi})=\rho_B\}$. Hence, the legal states must have Schmidt number $\mathrm{Sch}\left(\ket{\Psi}\right)$ equal to $d=\rank\left(\rho_{A/B}\right)$. For simplicity, we also set the local dimension $\dim\left(\mH_{A/B}\right)$ as $d$. Thus $\dim(\mH_{AB})=d^2$.

By definition, a state is in the set of target state $\mT$ if and only if it satisfies the following two conditions:\\
(1) It belongs to $\mL$;\\
(2) It can be mapped to a state in $\mH_R$ by a linear isometry, i.e., it belongs to $\mV_\mT$.\\
Thus, the set of target states can be expressed as
\begin{equation}\label{eq:tl}
    \mT = \mV_\mT\cap\mL.
\end{equation}
This expression is essential to our discussion on the masking protocol. The degeneracy of the marginal states determines the structure of $\mL$.
In the following, we first study the nondegenerate case and the completely degenerate case, and then derive some results for the general case.


\subsection{Nondegenerate Case}\label{sec:ndc}

In this case, the marginal states can be written as
\begin{equation}\label{eq:ts}
    \rho_A=\sum^{d-1}_{j=0}\lambda_j\ket{\phi^A_j}\bra{\phi^A_j}, \rho_B=\sum^{d-1}_{j=0}\lambda_j\ket{\phi^B_j}\bra{\phi^B_j},
\end{equation}
where $\lambda_i\neq\lambda_j,\ \forall i\neq j$. By the purification process, the set of legal states $\mL_{\mathrm{ND}}$ is a $d$-dimensional Schmidt hyperdisk in the following form:
\begin{equation}\label{eq:schm}
    \ket{\Psi\left(\bm{\theta}\right)}=\sum^{d-1}_{j=0}\sqrt{\lambda_{j}}e^{i\theta_j}\ket{\phi^A_j \phi^B_j}.
\end{equation}
Hence $\dim\left(\mV_{\mL\mathrm{ND}}\right)=d$.
By Eq. (\ref{eq:tl}), the set of target states $\mT_\mathrm{ND}$ is a regular subset of this hyperdisk, i.e., $\mT_{\mathrm{ND}}=\mV_{T\mathrm{ND}}\cap\mL_{\mathrm{ND}}$.
Thus the dimension $n$ of input space is bounded as
\begin{equation}\label{ineq}
    n=\mathrm{dim}\left(\mathcal{V}_{\mT\mathrm{ND}}\right)\leq \mathrm{dim}\left(\mV_{\mL\mathrm{ND}}\right)=d.
\end{equation}
From Eq. (\ref{eq:reg_eqiv}), the equality holds if and only if $\mT_\mathrm{ND}=\mL_{\mathrm{ND}}$, which implies that the set of maskable states $\mR_\mathrm{ND}$ is a $d$-dimensional hyperdisk with $n=d$.

When $n<d$, the regular subset $\mT_\mathrm{ND}$ can consist of multiple hyperdisks. Furthermore, the set of maskable states $\mR_\mathrm{ND}$ may not live in a single hyperdisk. 
For example, we consider the following nondegenerate masking protocol with $n=3,d=4$.
Here, $\mT_\mathrm{ND}$ consists of the following two different subhyperdisks of the same Schmidt hyperdisk:
\begin{equation}
    \begin{aligned}
        \ket{\Psi_0(\alpha)}&=\ket{00}+\sqrt{2}\ket{11}+e^{i\alpha}\left(\sqrt{3}\ket{22}+2\ket{33}\right),\\
        \ket{\Psi_1(\beta)}&=\ket{00}+\sqrt{3}\ket{22}+e^{i\beta}\left(\sqrt{2}\ket{11}+2\ket{33}\right).\\
    \end{aligned}
\end{equation}
It follows that $\mV_{\mT\mathrm{ND}}$ is a three-dimensional subspace of $\mH_{AB}$:
\begin{equation}
    \mV_{\mT\mathrm{ND}}=\spanned\{\ket{00}+\sqrt{2}\ket{11},\sqrt{3}\ket{22}+2\ket{33},\ket{\Phi_\perp}\}
\end{equation}
where $\ket{\Phi_\perp}=\frac{2}{3}\ket{00}+\frac{4\sqrt{3}}{7}\ket{22}-\frac{\sqrt{2}}{3}\ket{11}-\frac{6}{7}\ket{33}$.
Here we define the masking machine $V_\mathrm{mask}$ as
$\ket{0}\rightarrow\frac{1}{\sqrt{3}}(\ket{00}+\sqrt{2}\ket{11})$, $\ket{1}\rightarrow\frac{1}{\sqrt{7}}(\sqrt{3}\ket{22}+2\ket{33})$ and $\ket{2}\rightarrow\sqrt{\frac{50}{21}}\ket{\Phi_\perp}$.
Then the corresponding set of maskable states $\mR_\mathrm{ND}$ consists of two two-dimensional hyperdisks $\mS_0$ and $\mS_1$ in the following form:
\begin{equation}\label{eq:ndm}
    \begin{aligned}
        \ket{\psi_0(\alpha)}&=\sqrt{3}\ket{0}+e^{i\alpha}\sqrt{7}\ket{1},\\
        \ket{\psi_1(\beta)}&=\frac{1}{\sqrt{3}}\ket{0}+\frac{3}{\sqrt{7}}\ket{1}-\sqrt{\frac{50}{21}}\ket{2}+\\
        &e^{i\beta}\left(\frac{2}{\sqrt{3}}\ket{0}+\frac{4}{\sqrt{7}}\ket{1}+\sqrt{\frac{50}{21}}\ket{2}\right),\\
    \end{aligned}
\end{equation}
The hyperdisk bases of these two hyperdisks are $\mB_0=\left\{\ket{0},\ket{1}\right\}$ and $\mB_1=\big\{\frac{1}{2}(\frac{1}{\sqrt{3}}\ket{0}+\frac{3}{\sqrt{7}}\ket{1}-\sqrt{\frac{50}{21}}\ket{2}),\frac{1}{\sqrt{6}}(\frac{2}{\sqrt{3}}\ket{0}+\frac{4}{\sqrt{7}}\ket{1}+\sqrt{\frac{50}{21}}\ket{2})\big\}$, respectively. From Proposition \ref{pp:n}, $\mS_0$ and $\mS_1$ are not subsets of a single three-dimensional hyperdisk. Therefore, the set of maskable states $\mR_\mathrm{ND}$ does not live in a hyperdisk in $\mH_R$.

This example indicates that one can mask states which do not live in a single hyperdisk in $\mH_R$, even using the nondegenerate masking protocol.

\subsection{Completely Degenerate Case}
In this case, $\rho_A=\rho_B=\iden/d$. In contrast to the nondegenerate case, the set of legal states $\mL_{\mathrm{CD}}$ consists of all the maximally entangled states, and hence is not restricted to a single hyperdisk.
Precisely, $\mL_{\mathrm{CD}}$ is expressed as
\begin{equation}\label{eq:me}
    \ket{\Psi(U)}=\frac{1}{\sqrt{d}}\sum^{d-1}_{j=0}U\otimes\iden\ket{jj}=U\otimes \iden\ket{\Phi_\iden},\ U\in\mathcal{U}_d,
\end{equation}
where $\ket{\Phi_\iden}=\frac{1}{\sqrt{d}}\sum_j\ket{jj}$, $\mathcal{U}_d$ is the set of $d$-dimensional unitary operators, and $\{\ket{j}\}_{j=0}^{n-1}$ is an orthonormal basis of $\mH_{A/B}$.
The set of target states $\mT_\mathrm{CD}$ is then expressed as
\begin{equation}
    \mT_\mathrm{CD}=\mV_{\mT\mathrm{CD}}\cap\mL_{\mathrm{CD}}.
\end{equation}

It is worth noting that $\mathcal{V}_{\mL\mathrm{CD}}=\spanned\left(\mL_\mathrm{CD}\right)=\mH_{AB}$.
This is because the generalized Bell states $\ket{\Psi_{jk}}\equiv X^{j}Z^{k}\otimes\iden\ket{\Psi_\iden}$ constitute a complete orthogonal basis of $\mH_{AB}$ \cite{gbs}. Here $Z=\sum_k \exp\left(2k\pi i/d\right)\ket{k}\bra{k}$ and  $X=\sum_k\ket{(k+1)\mathrm{mod}\  d}\bra{k}$ are generalized Pauli operators.
Hence the dimension $n$ of input space is bounded as
\begin{equation}
    n=\mathrm{dim}\left(\mV_{\mT\mathrm{CD}}\right)\leq \mathrm{dim}\left(\mV_{\mL\mathrm{CD}}\right)=d^2.
\end{equation}
This means that in some situations the dimension of input space can be larger than the Schmidt number of the target states.

Instead of a hyperdisk, $\mL_{CD}$ is the set of maximally entangled states, so $\mT_\mathrm{CD}$ may not be a subset of any hyperdisk.
The following example shows that $\mT_\mathrm{CD}$ can consist of an infinite number of hyperdisks.
Here we set $n=3,d=2$ and $\mV_{\mT\mathrm{CD}}=\spanned\{\ket{00},\ket{11},\ket{01}+\ket{10}\}$. The set of target states reads $\mT_{CD}=\bigcup_{\xi\eta}\mS_{\xi\eta}$, where $\mS_{\xi\eta}$ are two-dimensional hyperdisks:
\begin{equation}\label{eq:37}
    \ket{\Psi_{\xi\eta}(\theta)}=\ket{\phi^+_{\xi\eta}\phi^+_{\xi\eta}}+e^{i\theta}\ket{\phi^-_{\xi\eta}\phi^-_{\xi\eta}}
\end{equation}
with
\begin{equation}
    \begin{aligned}
    \ket{\phi^{+}_{\xi\eta}}&=\cos\frac{\xi}{2}\ket{0}+\sin\frac{\xi}{2}e^{i\eta}\ket{1},\\
    \ket{\phi^{-}_{\xi\eta}}&=\sin\frac{\xi}{2}\ket{0}-\cos\frac{\xi}{2}e^{i\eta}\ket{1}.
    \end{aligned}
\end{equation}
Here $(\xi,\eta)$ is continuously chosen in $\mathbb{R}^2$.
Then we define masking machine $V_{\mathrm{mask}}$ as $\ket{0}\rightarrow\ket{00}$, $\ket{1}\rightarrow\ket{11}$, and $\ket{2}\rightarrow\frac{1}{\sqrt{2}}(\ket{01}+\ket{10})$.
The corresponding hyperdisks $\mS^\prime_{\xi\eta} = V^\dagger_\mathrm{mask}\mS_{\xi\eta}$ in the of set maskable states $\mR_{CD}$ are expressed as
\begin{align}
    \begin{aligned}[b]
    \ket{\psi_{\xi\eta}(\theta)}&=\cos^2\frac{\xi}{2}\ket{0}+\sin^2\frac{\xi}{2}e^{i2\eta}\ket{1}+\frac{1}{\sqrt{2}}\sin\xi e^{i\eta}\ket{2}\\
    &+e^{i\theta}\left(\sin^2\frac{\xi}{2}\ket{0}+\cos^2\frac{\xi}{2}e^{i2\eta}\ket{1}\right.\\
    &-\left.\frac{1}{\sqrt{2}}\sin\xi e^{i\eta}\ket{2}\right).\\
    \end{aligned}\raisetag{20pt}
\end{align}

Using Proposition \ref{pp:n}, we find these two-dimensional hyperdisks $\mS^\prime_{\xi\eta}$ are not subsets of a single hyperdisk in $\mH_R$.
Furthermore, the set of maskable states $\mR_{\mathrm{CD}}$ contains an unlimited number of hyperdisks, because of the continuity of the index $(\xi,\eta)$.
This example shows that, by using a completely degenerate masking machine, the number of hyperdisks contained in the set of maskable states can go to infinity. In this sense, we say that the degeneracy of the masking machine may enhance its power. 

\subsection{General Case}\label{sec:gmp}
In general, the marginal states $\rho_A$ and $\rho_B$ are partially degenerate.
The $j$th eigenspace of $\rho_A$ ($\rho_B$) is denoted as $\mH^{(j)}_A$ ($\mH^{(j)}_B$). Its eigenvalue, degeneracy, and basis are labeled as $\lambda_j$, $g(j)$, and $\left\{\ket{j,k}\right\}^{g(j)-1}_{k=0}$, respectively.
Then the legal states can be expressed as
\begin{equation}\label{eq:gt}
    \begin{aligned}
    \ket{\Psi(U)}&=\sum^{t-1}_{j=0}\sqrt{\lambda_j}\sum^{g(j)-1}_{k=0}U^{(j)}\otimes\iden\ket{j,k}\ket{j,k}\\
    &= U\otimes\iden\ket{\Psi_\iden},
    \end{aligned}
\end{equation}
where $\ket{\Psi_\iden}=\sum^{t-1}_{j=0}\sqrt{\lambda_j}\sum^{g(j)-1}_{k=0}\ket{j,k}\ket{j,k}$, $\sum^{t-1}_{j=0}g(j)=d$, and $U$ is a block-diagonal unitary
\begin{equation}\label{eq:bd}
    U=\bigoplus^{t-1}_{j=0} U^{(j)}
\end{equation}
with each block $U^{(j)}$ acting on $\mH^{(j)}_A$. The dimension $n$ of input space is then bounded as
\begin{equation}
    n=\dim\left(\mV_\mT\right)\leq \dim\left(\mV_\mL\right)=\sum^{t-1}_{j=0} g^2(j).
\end{equation}

In general, the whole set of legal states $\mL$ does not live in a single hyperdisk. The following lemma gives a necessary and sufficient condition that a subset of $\mL$ lives in a Schmidt hyperdisk. In other words, this lemma provides a criterion to determine whether $\mT$ is a subset of some Schmidt hyperdisk. 

\begin{lemma}\label{lm:ut}
    A set of states $\{\ket{\Psi(U)}\}_{U\in\mathcal{U}}$, with  $\ket{\Psi(U)}$ in the form of Eq. (\ref{eq:gt}), lives in a Schmidt hyperdisk $\mS^{AB}$, if and only if there exists a block-diagonal unitary matrix $U_T$ in the form of Eq. (\ref{eq:bd}) such that
    $\left[U U_T, U^\prime U_T\right]=0,\ \forall U,U^\prime\in\mathcal{U}$.
\end{lemma}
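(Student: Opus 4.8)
The plan is to translate the Schmidt-hyperdisk condition into a statement about a commuting family of unitaries on $\mH_A$. Two ingredients drive the argument. First, every element of a Schmidt hyperdisk $\mS^{AB}$ with Schmidt basis $\{\ket{\phi^A_l\phi^B_l}\}_{l=0}^{d-1}$ and coefficient vector $(r_l)$ (all $r_l\neq0$ by Definition~\ref{def:sh}) is obtained from the fixed state $\ket{\Psi_0^{AB}}=\sum_l r_l\ket{\phi^A_l\phi^B_l}$ by applying the local phase gate $D^A(\bm\theta)=\sum_l e^{i\theta_l}\ket{\phi^A_l}\bra{\phi^A_l}$ on $A$ (a Schmidt-basis version of Property 1); these gates commute and are diagonal in the single fixed basis $\{\ket{\phi^A_l}\}$. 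Second, since all eigenvalues $\lambda_j$ of $\rho_A$ are strictly positive, the map $X\mapsto(X\otimes\iden)\ket{\Psi_\iden}$ from operators on $\mH_A$ to vectors in $\mH_{AB}$ is injective. I will also use repeatedly that a block-diagonal unitary of the form of Eq.~(\ref{eq:bd}) commutes with $\rho_A$, so $U\rho_A U^\dagger=\rho_A$. Throughout, I relabel the eigenbasis of $\rho_A$ as $\{\ket{e_l}\}_{l=0}^{d-1}$, with $\rho_A\ket{e_l}=\lambda_{j(l)}\ket{e_l}$, so that $\ket{\Psi_\iden}=\sum_l\sqrt{\lambda_{j(l)}}\ket{e_l}\ket{e_l}$.

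For the ``only if'' direction, suppose $\{\ket{\Psi(U)}\}_{U\in\mathcal{U}}\subseteq\mS^{AB}$ and fix any $U_0\in\mathcal{U}$. For each $U\in\mathcal{U}$ choose phases $\bm\theta(U)$ with $\ket{\Psi(U)}=(D^A(\bm\theta(U))\otimes\iden)\ket{\Psi_0^{AB}}$, the overall-phase ambiguity being absorbed into $D^A(\bm\theta(U))$. Inverting the $U_0$ relation gives $\ket{\Psi_0^{AB}}=\big(D^A(\bm\theta(U_0))^\dagger U_0\otimes\iden\big)\ket{\Psi_\iden}$, and substituting it back yields $(U\otimes\iden)\ket{\Psi_\iden}=\big(D^A(\bm\theta(U))D^A(\bm\theta(U_0))^\dagger U_0\otimes\iden\big)\ket{\Psi_\iden}$ for every $U$. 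By injectivity, $UU_0^\dagger=D^A(\bm\theta(U))D^A(\bm\theta(U_0))^\dagger$, which is diagonal in $\{\ket{\phi^A_l}\}$; hence $\{UU_0^\dagger\}_{U\in\mathcal{U}}$ is a commuting family, and $U_T:=U_0^\dagger$ meets the requirements, being block-diagonal because $U_0\in\mathcal{U}$ is.

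For the ``if'' direction, let $U_T$ be block-diagonal with $\{UU_T\}_{U\in\mathcal{U}}$ pairwise commuting. Each $UU_T$ is block-diagonal, so this commuting family of unitaries is simultaneously diagonalized by a block-diagonal unitary $W=\bigoplus_j W^{(j)}$: there are phase vectors $\bm\theta(U)$ with $W^\dagger(UU_T)W=\sum_l e^{i\theta_l(U)}\ket{e_l}\bra{e_l}$. Set $\ket{\phi^A_l}:=W\ket{e_l}$; since $W$ is block-diagonal, $W^\dagger\rho_A W=\rho_A$, so the vector $\ket{\chi}:=(W^\dagger U_T^\dagger\otimes\iden)\ket{\Psi_\iden}$ has $A$-marginal $W^\dagger U_T^\dagger\rho_A U_T W=\rho_A$, which is diagonal in $\{\ket{e_l}\}$; hence $\ket{\chi}$ admits a Schmidt decomposition $\ket{\chi}=\sum_l\sqrt{\lambda_{j(l)}}\ket{e_l}\ket{f_l}$ with $\{\ket{f_l}\}$ orthonormal. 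Writing $U=(UU_T)U_T^\dagger$ and $UU_T=W\big(\sum_l e^{i\theta_l(U)}\ket{e_l}\bra{e_l}\big)W^\dagger$, one gets $\ket{\Psi(U)}=\sum_l\sqrt{\lambda_{j(l)}}\,e^{i\theta_l(U)}\ket{\phi^A_l}\ket{f_l}$, so $\{\ket{\Psi(U)}\}_{U\in\mathcal{U}}$ is contained in the Schmidt hyperdisk with Schmidt basis $\{\ket{\phi^A_l}\ket{f_l}\}_{l}$ and coefficient vector $(\sqrt{\lambda_{j(l)}})$.

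I expect the only genuinely delicate point to be keeping the block-diagonal constraint on $U_T$ alive in both directions; the rest — the relabeling $(j,k)\mapsto l$, orthonormality of the new bases, the overall-phase ambiguity in $D^A$, and the standard fact that a pairwise-commuting set of unitaries is simultaneously diagonalizable (with a block-diagonal diagonalizer here, because the operators themselves are block-diagonal) — is routine bookkeeping. Block-diagonality of $U_T$ comes for free from $U_0\in\mathcal{U}$ in the ``only if'' direction, and in the ``if'' direction it is exactly what forces $W$ to be block-diagonal, hence $W^\dagger\rho_A W=\rho_A$ and the Schmidt-alignment of $\ket{\chi}$ with the $\ket{e_l}$ — the single spot where the eigenstructure of $\rho_A$ actually matters.
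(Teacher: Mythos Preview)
Your proof is correct and follows essentially the same route as the paper: translate membership in a Schmidt hyperdisk into the statement that the local unitaries $UU_T$ form a commuting (hence simultaneously diagonalizable) family, with block-diagonality of $U_T$ ensuring compatibility with the eigenspaces of $\rho_A$. The one minor difference is that in the ``only if'' direction you take $U_T=U_0^\dagger$ for a fixed $U_0\in\mathcal U$ (using injectivity of $X\mapsto(X\otimes\iden)\ket{\Psi_\iden}$), whereas the paper constructs $U_T$ explicitly from the Schmidt basis of $\mS^{AB}$; your choice is a slight streamlining but not a different idea.
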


\begin{proof}
    First, we notice that $\ket{\Psi_\iden}$ can be reformulated as
    \begin{equation}\label{eq:alt}
        \ket{\Psi_\iden}=\sum^{t-1}_{j=0}\sqrt{\lambda_j}\sum^{g(j)-1}_{k=0}\ket{\phi^*_{jk}\phi_{jk}},
    \end{equation}
    where $\{\ket{\phi_{jk}}\}_k$ is an arbitrary orthonormal basis of $\mH^{(j)}_B$, and
   $\ket{\phi^*_{jk}}$ denotes the conjugate state of $\ket{\phi_{jk}}$, i.e.,  $\bra{j^\prime,k^\prime}\ket{\phi^*_{jk}}=\bra{\phi_{jk}}\ket{j^\prime,k^\prime},\ \forall i',j'$.

     \noindent \textbf{[Sufficient part]}: we start from the condition 
    \begin{equation}\label{eq:suf_con}
        \{\ket{\Psi(U)}\}_{U\in\mathcal{U}}\subseteq\mS^{AB}.
    \end{equation}
    Because the states $\ket{\Psi(U)}$ share the same Schmidt coefficients, any Schmidt hyperdisk $\mS^{AB}$ satisfying Eq. (\ref{eq:suf_con}) is in the following form
    \begin{equation}
        \ket{\Psi(\bm{\theta})}=\sum_j\sqrt{\lambda_j}\sum_k e^{i\theta_{jk}}\ket{\psi_{jk}\phi_{jk}},\ \bm{\theta}\in\mathbb{R}^d,
    \end{equation}
    where $\{\ket{\psi_{jk}}\}_k$ and $\{\ket{\phi_{jk}}\}_k$ are orthonormal bases of $\mH^{(j)}_A$ and $\mH^{(j)}_B$, respectively.
    Comparing with Eq. (\ref{eq:alt}), we find $\ket{\Psi(\bm{\theta})}=U(\bm{\theta})\otimes\iden\ket{\Psi_\iden}$ with
\begin{equation*}
U(\bm{\theta})=\bigoplus_j\left[\sum_k e^{i\theta_{jk}}\ket{\psi_{jk}}\bra{\phi^*_{jk}}\right].
\end{equation*}
It follows that there exists a block-diagonal unitary operator
\begin{equation*}
    U_T=\bigoplus_j\left[\sum_k \ket{\phi^*_{jk}}\bra{\psi_{jk}}\right],
\end{equation*}
    such that $\left[U(\bm{\theta}) U_T,U(\bm{\theta}^\prime) U_T\right]=0,\ \forall \bm{\theta},\bm{\theta'}\in\mathbb{R}^d$. From Eq. (\ref{eq:suf_con}), we have $\mathcal{U}\subseteq\{U(\bm{\theta})|\bm{\theta}\in\mathbb{R}^d\}$and, hence, $\left[U U_T,U' U_T\right]=0,\ \forall U,U'\in\mathcal{U}$.

    \noindent \textbf{[Necessary part]}: we start from the condition that there exists a block-diagonal unitary operator $U_T=\bigoplus_j U^{(j)}_T$ such that $\left[U U_T, U^\prime U_T\right]=0,\ \forall U,U'\in\mathcal{U}$.
    From the commutative property, the unitary operators $UU_T$ can be simultaneously diagonalized, i.e., there is an orthonormal basis $\{\ket{\psi_{jk}}\}_{jk}$ such that $\forall U\in\mathcal{U}$, and we have
    \begin{equation*}
        UU_T=\bigoplus_j\left[\sum_k e^{i\theta_{jk}}\ket{\psi_{jk}}\bra{\psi_{jk}}\right],
    \end{equation*}
    It follows that any unitary operator $U\in\mathcal{U}$ can be written as
    \begin{equation*}
        U=UU_T U_T^\dagger=\bigoplus_j\left[\sum_k e^{i\theta_{jk}}\ket{\psi_{jk}}\bra{\psi_{jk}}U_{T}^{(j)\dagger}\right]
    \end{equation*}
    Together with Eq. (\ref{eq:alt}), we arrive at 
    \begin{equation*}
         \ket{\Psi(U)}=U\otimes\iden\ket{\Psi_\iden}=\sum_j\sqrt{\lambda_j}\sum_k e^{i\theta_{jk}}\ket{\psi_{jk}\phi_{jk}},
    \end{equation*}
    where $\ket{\phi_{jk}}=\left(U_T^{(j)}\ket{\psi_{jk}}\right)^*$.
   This means that $\left\{\ket{\Psi(U)}\right\}_{U\in\mathcal{U}}$ lives in a Schmidt hyperdisk with hyperdisk basis $\left\{\ket{\psi_{jk}\phi_{jk}}\right\}_{jk}$.
\end{proof}

\section{Structure of the set of maskable states for qubits and qutrits}
In this section, we derive the explicit structures of the set of maskable states $\mR$ for two-dimensional and three-dimensional spaces. Note that instead of checking whether the maskable states lie on a hyperdisk we now focus on fully characterizing the structure of the set of maskable states. Recalling Def. \ref{def:masking}, the dimension of input space and the Schmidt number of target states are denoted by $n$ and $d$, respectively. 

\subsection{\texorpdfstring{$n=2,d\geq 2$}{n=2,d>=2}}
Here $\mH_R$ is restricted to be a qubit space while the dimension of $\mV_\mL$ is not limited.
The following theorem shows the structure of the set of maskable qubit states. 

\begin{theorem}\label{th:1}
    Let $\mH_R$ be a Hilbert space of qubits. The set of maskable states $\mR\subset\mH_R$ is either a two-dimensional hyperdisk or a set of two states.
\end{theorem}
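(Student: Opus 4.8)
The plan is to transfer everything from $\mR$ to the set of target states $\mT=\mV_\mT\cap\mL$ of Eq.~(\ref{eq:tl}). Since $\mH_R=\spanned\{\mR\}$, the masking isometry $V_{\mathrm{mask}}$ maps $\mH_R$ onto $\mV_\mT$, so its inverse is a linear isometry $\mV_\mT\to\mH_R$; by Property~2 it preserves the hyperdisk structure, and a bijection sends a two-element set to a two-element set, so it suffices to show that $\mT$ is either a two-dimensional hyperdisk or a set of two states. As $\dim(\mV_\mT)=n=2$, the set $\mT$ contains two linearly independent states $\ket{\Psi_0},\ket{\Psi_1}$. Choosing the eigenbases of $\rho_A,\rho_B$ that appear in $\ket{\Psi_\iden}$ to be the Schmidt bases of $\ket{\Psi_0}$, we may assume $\ket{\Psi_0}=\ket{\Psi_\iden}$, and write $\mT=\{\ket{\Psi(U)}\}_{U\in\mathcal{U}}$ with $\ket{\Psi(U)}$ in the form of Eq.~(\ref{eq:gt}), $\iden\in\mathcal{U}$, and every $U\in\mathcal{U}$ a block-diagonal unitary as in Eq.~(\ref{eq:bd}).

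Next I would show that $n=2$ forces $\mathcal{U}$ to be a commuting set, which is exactly what makes Lemma~\ref{lm:ut} applicable. Fix $U_1\in\mathcal{U}$ with $\ket{\Psi(U_1)}$ linearly independent from $\ket{\Psi_\iden}$. For any $U\in\mathcal{U}$ the state $\ket{\Psi(U)}=(U\otimes\iden)\ket{\Psi_\iden}$ lies in $\spanned\{\ket{\Psi_\iden},\ket{\Psi(U_1)}\}$, hence equals $((a\iden+b\,e^{i\varphi}U_1)\otimes\iden)\ket{\Psi_\iden}$ for suitable $a,b,\varphi$. A legal state determines its block-diagonal unitary uniquely: if a block-diagonal unitary $W$ satisfies $(W\otimes\iden)\ket{\Psi_\iden}=\ket{\Psi_\iden}$, then on each eigenblock $W$ fixes a (scaled) maximally entangled vector and therefore equals the identity there. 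Consequently $U=a\iden+b\,e^{i\varphi}U_1$ lies in the complex span of $\{\iden,U_1\}$, so any two elements of $\mathcal{U}$ commute. Taking $U_T=\iden$ in Lemma~\ref{lm:ut} then produces a Schmidt hyperdisk $\mS$ with $\mT\subseteq\mS$; following the construction in that proof, the hyperdisk basis vectors on each side span the eigenspaces of $\rho_{A/B}$ with the correct weights, so every state in $\mS$ has marginals $\rho_A$ and $\rho_B$, i.e. $\mS\subseteq\mL$.

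Putting these together, $\mT=\mV_\mT\cap\mL=\mV_\mT\cap\mS$ (the inclusion $\mV_\mT\cap\mS\subseteq\mT$ uses $\mS\subseteq\mL$, the reverse uses $\mT\subseteq\mS$), so $\mT$ is a two-dimensional regular subset of the hyperdisk $\mS$. Lemma~\ref{lm:c2} then gives that $\mT$ is a set of two distinct pure states or a two-dimensional hyperdisk, and transporting this back through $V_{\mathrm{mask}}^{-1}$ via Property~2 finishes the proof. I expect the commuting-set step to be the crux: it is the only place the hypothesis $n=2$ is used, and it rests on two things that need care — confining all the $U\in\mathcal{U}$ to a two-dimensional complex operator space, and the uniqueness of the block-diagonal unitary representing a legal state — after which one still has to verify that the hyperdisk handed back by Lemma~\ref{lm:ut} lies inside $\mL$, so that $\mT$ equals (not merely sits inside) $\mV_\mT\cap\mS$.
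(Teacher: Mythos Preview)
Your argument is correct and follows essentially the same route as the paper: reduce to $\mT$, use $n=2$ to force all the block-diagonal unitaries into a two-dimensional (hence commuting) operator span, apply Lemma~\ref{lm:ut} to land $\mT$ in a Schmidt hyperdisk, and finish with Lemma~\ref{lm:c2}. You are in fact more careful than the paper on two points it leaves implicit --- the injectivity that gives $U=a\iden+b\,e^{i\varphi}U_1$ (note your phrasing should cover all block-diagonal \emph{operators}, not just unitaries, since $a\iden+b\,e^{i\varphi}U_1$ is not a priori unitary), and the inclusion $\mS\subseteq\mL$ needed to upgrade $\mT\subseteq\mS$ to $\mT=\mV_\mT\cap\mS$.
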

\begin{proof}
    Because $\mR$ is isomorphic to the set of target states $\mT$, here we only need to prove that $\mT$ is either a two-dimensional hyperdisk or a set of two states.
    From $n=2$, there are at least two states, labeled as $\ket{\Psi_0}$ and $\ket{\Psi_1}$, in $\mT$, which in turn belongs to the set of legal states $\mL$. 
    In general, states in $\mL$ can be written in the form of Eq. (\ref{eq:gt}), so we have
    \begin{equation}
        \ket{\Psi_0}=U_0\otimes\iden\ket{\Psi_\iden},\quad 
        \ket{\Psi_1}=U_1\otimes\iden\ket{\Psi_\iden},
    \end{equation}
    where $\ket{\Psi_\iden}$ is an entangled state with Schmidt number $d\geq 2$, and $U_0$ and $U_1$ are two block-diagonal $d$-dimensional unitary matrices. 
    
    Because $\dim\left(\mV_\mT\right)=n=2$ and $\ket{\Psi_0},\ket{\Psi_1}\in\mT\subset\mV_\mT$, we have $\mV_\mT=\spanned\left\{\ket{\Psi_0},\ket{\Psi_1}\right\}$. 
    According to Eq. (\ref{eq:tl}), any target state $\ket{\Psi}\in\mT$ can be written as
    \begin{equation}
        \ket{\Psi}=a\ket{\Psi_0}+b\ket{\Psi_1}=U(a,b)\otimes\iden\ket{\Psi_\iden},
    \end{equation}
    where $a$ and $b$ are chosen such that $\ket{\Psi}$ is normalized and $U(a,b)$ is a unitary operator. Here the first equation is from $\ket{\Psi}\in\mV_\mT$ and the second equation is from $\ket{\Psi}\in\mL$. It follows that $U(a,b)=a U_0 + b U_1$.
    By choosing $U_T=U_0^\dagger$, we have $[U(a,b)U_T,U(a^\prime,b^\prime)U_T]=0$ for all $U(a,b)$.
    From Lemma \ref{lm:ut}, the set of target states live in a Schmidt hyperdisk $\mS^{AB}$, i.e. $\mT\subseteq\mS^{AB}$. 

    Therefore, $\mT=\mV_\mT\cap\mS^{AB}$. This means that $\mT$ is a two-dimensional regular subset of $\mS^{AB}$.
    From Lemma \ref{lm:c2}, $\mT$ consists of either a two-dimensional hyperdisk or two single states.
\end{proof}

By condition (2) in Def. \ref{def:masking}, for a given masking machine $V_\mathrm{mask}$, the set of maskable states contains all of the states which can be masked by $V_\mathrm{mask}$. Therefore, Theorem \ref{th:1} indicates that for qubit states as input, other than the masking machines the maskable states of which constitute a two-dimensional hyperdisk, there are masking machines for which the whole set of the machine's maskable states contains only two states. 

Since any two states belong to a two-dimensional hyperdisk, Theorem \ref{th:1} implies that the hyperdisk conjecture in Ref. \cite{masking} holds for the qubit case, which is one of the main results in Ref. \cite{PhysRevA.100.030304}. Here we emphasize that the statement in Theorem $\ref{th:1}$ is stronger, in that it characterizes all valid structures of $\mR$ in qubit space (no matter which masking machine is employed and no matter how large the Schmidt number of target states is). For instance, any three states on a Bloch sphere live on a disk, but from Theorem \ref{th:1} there is no masking machine the maskable states of which contain only three states. In fact, for any masking machine which can mask more than two states, the set of its maskable states constitutes a two-dimensional hyperdisk.

\subsection{\texorpdfstring{$n=3,d=3$}{n=3,d=3}}
For $\mH_R$ with higher dimension, we have shown in the last section that the set of maskable states may not belong to a single hyperdisk.
The following theorem provides a series of explicit structures of the set of target states $\mT$, and the structure of the set of maskable states $\mR$ is inferred from the isometry \hyperref[pty:2]{Property 2}.

\begin{theorem}\label{lmgcn3}
    For $n=d=3$, if the set of target states $\mT$ contains at least one two-dimensional subhyperdisk of a Schmidt hyperdisk $\mS$, then the structure of $\mT$ is one of the following three types.
        \item type I: $\mT$ is a three-dimensional Schmidt hyperdisk;
        \item type II: $\mT$ consists of two two-dimensional subhyperdisks locating on two different Schmidt hyperdisks;
        \item type III: $\mT$ consists of a two-dimensional subhyperdisk of a Schmidt hyperdisk and a single state locating on another Schmidt hyperdisk.
\end{theorem}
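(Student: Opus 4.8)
The plan is to fix the Schmidt hyperdisk $\mathcal{S}$ containing the promised two-dimensional subhyperdisk $\mathcal{C}_0 \subseteq \mathcal{T}$, and to argue by cases on how the remaining target states sit relative to $\mathcal{S}$. Since $\dim(\mathcal{V}_\mathcal{T}) = n = 3$ and $\mathcal{C}_0$ is two-dimensional, pick $\ket{\Psi_2}\in\mathcal{T}$ with $\ket{\Psi_2}\notin\mathcal{V}_{\mathcal{C}_0}$, so that $\mathcal{V}_\mathcal{T} = \spanned\{\mathcal{V}_{\mathcal{C}_0},\ket{\Psi_2}\}$ and, by Eq.~(\ref{eq:tl}), $\mathcal{T} = \mathcal{V}_\mathcal{T}\cap\mathcal{L}$. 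The first dichotomy is whether $\ket{\Psi_2}\in\mathcal{S}$ or not. If $\ket{\Psi_2}\in\mathcal{S}$, then $\mathcal{C}_0$ together with $\ket{\Psi_2}$ all lie in the single Schmidt hyperdisk $\mathcal{S}$, so $\mathcal{T}\subseteq\mathcal{S}$ and $\mathcal{T} = \mathcal{V}_\mathcal{T}\cap\mathcal{S}$ is a three-dimensional regular subset of $\mathcal{S}$; I would then show this forces $\mathcal{T}$ to be a genuine three-dimensional subhyperdisk of $\mathcal{S}$ (hence itself a Schmidt hyperdisk, by the structure of subhyperdisks derived after Def.~4), which is type~I. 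The content here is ruling out that a three-dimensional regular subset of a three-dimensional hyperdisk could break into lower pieces — essentially the same phase-rigidity computation as in Lemma~\ref{lm:c2}, now with three free phases, together with Eq.~(\ref{eq:reg_eqiv}).

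When $\ket{\Psi_2}\notin\mathcal{S}$, the strategy is to look at the two-dimensional regular subsets of $\mathcal{T}$ obtained by slicing with planes through $\mathcal{V}_{\mathcal{C}_0}$. Concretely, for each $\ket{\chi}\in\mathcal{C}_0$ the set $\mathcal{T}\cap\spanned\{\ket{\chi},\ket{\Psi_2}\}$ is a two-dimensional regular subset of whatever hyperdisk contains $\ket{\chi}$ and $\ket{\Psi_2}$; by Lemma~\ref{lm:c2} it is either exactly $\{\ket{\chi},\ket{\Psi_2}\}$ or a two-dimensional subhyperdisk. I would first argue that $\ket{\Psi_2}$ together with $\mathcal{C}_0$ cannot all lie in one three-dimensional hyperdisk other than by the type~I situation: apply Proposition~\ref{pp:n} to $\mathcal{C}_0$ and any two-dimensional subhyperdisk through $\ket{\Psi_2}$ to extract orthogonality relations between their hyperdisk bases, and combine with the Schmidt-basis rigidity (a hyperdisk through $\ket{\Psi_2}$ that is contained in $\mathcal{L}$ must be a Schmidt hyperdisk, by Lemma~\ref{lm:ut} applied to the two unitaries generating $\ket{\chi}$ and $\ket{\Psi_2}$ from $\ket{\Psi_\iden}$). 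This pins down the Schmidt basis of the second hyperdisk $\mathcal{S}'$ in terms of that of $\mathcal{S}$ up to the freedom allowed by the orthogonality from Proposition~\ref{pp:n}.

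The remaining case split is then on how many of the slices $\mathcal{T}\cap\spanned\{\ket{\chi},\ket{\Psi_2}\}$ are two-dimensional subhyperdisks rather than two-point sets. If for some $\ket{\chi}\in\mathcal{C}_0$ this slice is a two-dimensional subhyperdisk $\mathcal{C}_1$, then $\mathcal{C}_1$ lies on a Schmidt hyperdisk $\mathcal{S}'\neq\mathcal{S}$ (distinct because $\ket{\Psi_2}\notin\mathcal{S}$), and I must show that $\mathcal{T}$ contains nothing beyond $\mathcal{C}_0\cup\mathcal{C}_1$; this is where Eq.~(\ref{eq:tl}), the dimension count $\dim(\mathcal{V}_\mathcal{T})=3$, and the rigidity of the two Schmidt bases conspire to kill any further state. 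That gives type~II. If instead every such slice is a two-point set, then $\ket{\Psi_2}$ is an isolated target state in each plane through $\mathcal{V}_{\mathcal{C}_0}$, and one shows $\mathcal{T} = \mathcal{C}_0\cup\{\ket{\Psi_2}\}$ with $\ket{\Psi_2}$ sitting on a Schmidt hyperdisk of its own — type~III. The main obstacle I anticipate is the bookkeeping in the $\ket{\Psi_2}\notin\mathcal{S}$ branch: showing that once a second subhyperdisk $\mathcal{C}_1$ appears, the intersection $\mathcal{V}_\mathcal{T}\cap\mathcal{L}$ cannot acquire a third subhyperdisk or stray points. I expect this to come down to the observation that three pairwise-intersecting two-dimensional Schmidt hyperdisks inside a common three-dimensional subspace would, via Proposition~\ref{pp:n} and the Schmidt-basis constraints, force the ambient legal set $\mathcal{L}$ to contain a three-dimensional hyperdisk through all of them, collapsing back to type~I; carefully verifying that no degenerate configuration escapes this argument is the delicate part.
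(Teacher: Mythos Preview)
Your high-level case split --- fix $\ket{\Psi_2}\in\mT\setminus\mV_{\mC_0}$ and branch on whether $\ket{\Psi_2}\in\mS$ --- matches the paper's opening move, and your type~I argument is correct (in fact $\mV_\mT=\mV_\mS$ directly gives $\mT=\mV_\mS\cap\mL=\mS$ via Eq.~(\ref{eq:reg_eqiv}); no extra phase-rigidity is needed). But the $\ket{\Psi_2}\notin\mS$ branch has a genuine gap, and the paper's execution there is quite different from what you sketch.

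The gap is in the slicing. The planes $\spanned\{\ket{\chi},\ket{\Psi_2}\}$ with $\ket{\chi}\in\mC_0$ do not exhaust $\mV_\mT$: a generic $\ket{\Phi}\in\mT$ lies in $\spanned\{\ket{\xi},\ket{\Psi_2}\}$ for some direction $\ket{\xi}\in\mV_{\mC_0}$, but $\ket{\xi}$ need not be proportional to any state of $\mC_0$. So even if every slice through a point of $\mC_0$ is a two-point set, stray legal states can sit in planes through $\ket{\Psi_2}$ and a non-hyperdisk direction of $\mV_{\mC_0}$; your type~III conclusion does not follow as stated. Likewise, in the type~II branch, Proposition~\ref{pp:n} only tells you when two $m$-dimensional subhyperdisks \emph{cannot} share an $(m{+}1)$-dimensional ambient hyperdisk; it does not prevent a third two-dimensional subhyperdisk, or isolated stray points, from coexisting with $\mC_0$ and $\mC_1$ inside $\mV_\mT\cap\mL$ --- precisely the ``delicate part'' you flag, and the mechanism you propose does not close it.

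The paper handles both issues by explicit computation rather than structural lemmas. It separates the partially degenerate and completely degenerate cases (the form of $\mL$, and hence of the admissible $\mS_0$, differs --- see Eq.~(\ref{eq:12choice})), parametrizes an arbitrary $\ket{\Psi'}\in\mL\setminus\mS$ by local unitary coordinates, and computes its orthogonal projection $\ket{\Psi'_\perp}$ onto $\mV_{\mS_0}^\perp$. The decisive step is a one-to-one correspondence between $\ket{\Psi'_\perp}$ (hence $\mV_\mT$) and the parameters of the second subhyperdisk when one exists (Part~2 of Appendix~\ref{app:A}), respectively between $\ket{\Psi''_\perp}$ and $\ket{\Psi''}$ itself in the completely degenerate case (Appendix~\ref{app:B}); this is what rules out extra subhyperdisks and stray points. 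The degeneracy split is not cosmetic: type~III arises only when $\rho_A=\iden/3$, a fact your abstract approach does not detect. If you want to avoid the paper's coordinate computations, you would need a tool substantially sharper than Proposition~\ref{pp:n} and Lemma~\ref{lm:ut} to finish.
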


For a nondegenerate masking protocol with $n=d$, the set of target states must have type I form, which has been discussed below Eq. (\ref{ineq}). The complete proof of Theorem \ref{lmgcn3} will be given in Appendices \ref{app:A} and \ref{app:B}. Here we show the general expression of each type of $\mT$.\\
\emph{type I:} $\mT=\{ \ket{\Psi(\theta_1,\theta_2)}|\theta_1,\theta_2\in[0,2\pi)\}$, where
\begin{equation}
        \Scale[0.9]{\ket{\Psi(\theta_1,\theta_2)}=\sqrt{\lambda_0}\ket{00}+e^{i\theta_1}\sqrt{\lambda_1}\ket{11}+e^{i\theta_2}\sqrt{\lambda_2}\ket{22}.}
\end{equation}
Here $\left\{\ket{00},\ket{11},\ket{22}\right\}$ is a Schmidt basis.\\
\emph{type II:}  $\mT=\left\{\ket{\Psi_0(\alpha)},\ket{\Psi_1(\beta)}|\alpha,\beta\in[0,2\pi)\right\}$, where
\begin{equation}\label{eq:type2}
    \Scale[0.9]{\begin{aligned}
    \ket{\Psi_0(\alpha)}&=\sqrt{\lambda_1}\ket{00}+e^{i\alpha}\left(\sqrt{\lambda_1}\ket{11}+\sqrt{\lambda_2}\ket{22}\right),\\
    \ket{\Psi_1(\beta)}&=\sqrt{\lambda_1}\left(\ket{\phi^-_{01}\psi^-_{01}}+e^{i\beta}\ket{\phi^+_{01}\psi^+_{01}}\right)+\sqrt{\lambda_2}\ket{22}.
    \end{aligned}}
\end{equation}
Here $\left\{\ket{\phi^+_{01}},\ket{\phi^-_{01}}\right\}$ and $\left\{\ket{\psi^+_{01}},\ket{\psi^-_{01}}\right\}$ are two orthogonal bases of $\spanned\left\{\ket{0},\ket{1}\right\}$ and satisfy $0<\left|\bra{0}\ket{\phi^+_{01}}\right|=\left|\bra{0}\ket{\psi^+_{01}}\right|<1$.\\
\emph{type III:}  $\mT=\{\ket{\Psi_0(\alpha)}, \ket{\Psi^\prime}|\alpha\in[0,2\pi)\}$, where
\begin{equation}
    \begin{aligned}
    &\quad\ket{\Psi_0(\alpha)}=\ket{00}+e^{i\alpha}\left(\ket{11}+\ket{22}\right),\\
        \ket{\Psi^\prime}&=\cos\frac{\theta}{2}\ket{00}+\sin\frac{\theta}{2}\left(e^{i\varphi_0}\ket{10}+e^{i\varphi_1}\ket{0}\ket{\psi^+_{12}}\right)\\
        &+e^{i(\varphi_0+\varphi_1)}\left(e^{i\eta}\ket{2}\ket{\psi^-_{12}}-\cos\frac{\theta}{2}\ket{1}\ket{\psi^+_{12}}\right).
    \end{aligned}
\end{equation}
Here $\theta\in(0,\pi]$, $\left|\bra{1}\ket{\psi^+_{12}}\right|\neq 1$, $\{\ket{\psi^+_{12}},\ket{\psi^-_{12}}\}$ is an orthonormal basis of $\spanned\{\ket{1},\ket{2}\}$, and $\eta,\varphi_0$ and $\varphi_1$ are relative phases in $[0,2\pi)$.

From the above expressions, we can see that completely degenerate masking machines can realize all  three types of target states, partially degenerate masking machines can realize types I and II, while a nondegenerate masking machine can only realize type I.

It is worth mentioning that, if we neglect the condition that $\mT$ contains at least one two-dimensional subhyperdisk of the Schmidt hyperdisk,
then $\mT$ has structures other than the above three types. A simple example is
\begin{equation}\label{eq:spc1}
    \mT=\left\{\ket{\Phi_\iden},\ Z\otimes\iden\ket{\Phi_\iden},\ X\otimes\iden\ket{\Phi_\iden}\right\},
\end{equation}
where $\ket{\Phi_\iden}=\frac{1}{\sqrt{3}}\left(\ket{00}+\ket{11}+\ket{22}\right)$, and $X$ and $Z$ are three-dimensional generalized Pauli matrices. In this example, $\mT$ only consists of three orthogonal states. This set of target states is obtained when we set $\mV_\mT=\spanned\left\{\ket{\Phi_\iden},\ Z\otimes\iden\ket{\Phi_\iden},\ X\otimes\iden\ket{\Phi_\iden}\right\}$ and $\mL$ to be the set of maximally entangled states.

Further, the following example shows another structure of $\mT$. When we set $\mV_\mT=\spanned\{\ket{00},\ket{11},\sqrt{\lambda_1/2}i(\ket{01}+\ket{10})+\sqrt{\lambda_2}\ket{22}\}$, and set $\mL$ to be the set of bipartite states which have partially degenerate marginal states $\rho_A=\rho_B=\lambda_1\left(\ket{0}\bra{0}+\ket{1}\bra{1}\right)+\lambda_2\ket{2}\bra{2}$, the set of target states reads
\begin{equation}\label{eq:spc2}
    \begin{aligned}
        \ket{\Psi(\eta)}&=\sqrt{\lambda_1/2}\left(e^{i\eta}\ket{00}+e^{-i\eta}\ket{11}\right)\\
        &+\sqrt{\lambda_1/2}i\left(\ket{01}+\ket{10}\right)+\sqrt{\lambda_2}\ket{22}.
    \end{aligned}
\end{equation}
As for this example, $\mT$ contains an infinite number of states, but does not contain any nontrivial hyperdisk (i.e., the dimension of the hyperdisk is strictly larger than 1). 

\section{conclusion}

We have studied the structure of the set of maskable states and its relation to hyperdisks. Precisely, we develop a general method to determine the set of target states (which is isomorphic to the set of maskable states) , and prove criteria for judging whether a set of states belongs to a hyperdisk.
we find that the structure of maskable states depends on the dimension $n$ of the input space, the Schmidt number $d$ of the target states, and the degeneracy of marginal states. Further, we derive the valid structures of the set of maskable states for the two cases with $n=2,d\geq 2$ and with $n=d=3$. In doing so, we prove the hyperdisk conjecture in Ref. \cite{masking} for $n=2$, and disprove it for $n>2$.

In most of the cases we have considered in this paper, the set of maskable state consists of a finite amount of hyperdisks. However, when the degeneracy of marginal states goes high, masking machines can be designed to mask an infinite number of hyperdisks [see Eq. (\ref{eq:37}) as an example]. This is an evidence that degenerate masking machines are more powerful than the nondegenerate ones. Nevertheless, it is an open question for a nondegenerate masking machine  whether the number of hyperdisks in the set of maskable states is always finite. A related open question is whether a regular subset of a hyperdisk can be covered with finite number of hyperdisks (also see discussion above Lemma \ref{lm:c2}).


\begin{acknowledgments}
    This work was supported by National Natural Science Foundation of China under Grant No. 11774205, and the Young Scholars Program of Shandong University.
\end{acknowledgments}

\appendix
\onecolumngrid

\section{Proof for the structure of partially degenerate masker when \texorpdfstring{$n=d=3$}{n=d=3}}\label{app:A}
    
    In an $n=d=3$ partially degenerate masking protocol, without loss of generality, we write down the marginal states as
    \begin{equation}\label{eq:margin}
        \rho_A=\rho_B=\lambda_1\ket{0}\bra{0}+\lambda_1\ket{1}\bra{1}+\lambda_2\ket{2}\bra{2},
    \end{equation}
    where $\lambda_1\neq\lambda_2$. 
    Thus, $\mL$ is fixed as
    \begin{equation}
        \ket{\Psi(U)}=U\otimes\iden\left(\sqrt{\lambda_1}\ket{00}+\sqrt{\lambda_1}\ket{11}+\sqrt{\lambda_2}\ket{22}\right), U\in\mathcal{U}.
    \end{equation}
    Here $\mathcal{U}$ is the set of block-diagonal unitary matrices in the form $U=U_{01}+e^{i\eta}\ket{2}\bra{2}$, where $U_{01}$ is an arbitrary unitary matrix acting on $\spanned\left\{\ket{0},\ket{1}\right\}$ and $\eta\in[0,2\pi)$.
    Recalling $\mT=\mL\cap\mV_\mT$, we find that $\mT$ is fully characterized by $\mV_\mT$.
    Thus, we mainly focus on deriving $\mV_\mT$ to get the different type of $\mT$.

    From the condition of Theorem 2, there is a two-dimensional subhyperdisk $\mS_0$ of Schmidt hyperdisk $\mS$. 
    According to the general form of subhyperdisks, there are two possible forms of $\mS_0$, which are expressed as
    \begin{subequations}\label{eq:12choice}
    \begin{align}
    \ket{\Psi_0(\alpha)}&=\sqrt{\lambda_1}\left(\ket{00}+\ket{11}\right)+e^{i\alpha}\sqrt{\lambda_2}\ket{22},\label{eq:1choice}\\
    \ket{\Psi_0(\alpha)}&=\sqrt{\lambda_1}\ket{00}+e^{i\alpha}\left(\sqrt{\lambda_1}\ket{11}+\sqrt{\lambda_2}\ket{22}\right).\label{eq:2choice}
    \end{align}
    \end{subequations}
    If $\mS_0$ takes the form in Eq. (\ref{eq:1choice}), then $\mT$ takes the type I form. In order to prove this, we need to derive $\mV_\mT$. First, an arbitrary state $\ket{\Psi^\prime}\in\mT\setminus\mS_0$ is a legal state, and thus can be written as
    \begin{equation}
    \begin{aligned}
        \ket{\Psi^\prime}&=U\otimes\iden\left(\sqrt{\lambda_1}\ket{00}+\sqrt{\lambda_1}\ket{11}+\sqrt{\lambda_2}\ket{22}\right)\\
        &=\left(\ket{\phi^-_{01}}\bra{\phi^-_{01}}+e^{i\gamma_0}\ket{\phi^+_{01}}\bra{\phi^+_{01}}+e^{i\gamma_1}\ket{2}\bra{2}\right)\otimes\iden\left(\sqrt{\lambda_1}\ket{00}+\sqrt{\lambda_1}\ket{11}+\sqrt{\lambda_2}\ket{22}\right)\\
        &=\sqrt{\lambda_1}\left(\ket{\phi^-_{01}\phi^{-*}_{01}}+e^{i\gamma_0}\ket{\phi^+_{01}\phi^{+*}_{01}}\right)+e^{i\gamma_1}\sqrt{\lambda_2}\ket{22},
    \end{aligned}
    \end{equation}
    where $\left\{\ket{\phi^-_{01}},\ket{\phi^+_{01}}\right\}$ is an arbitrary orthonormal basis for subspace $\spanned\{\ket{0},\ket{1}\}$ and $\gamma_0\neq0$.
    Here the second equation is due to the fact that the unitary $U_{01}$ can generally be written as $U_{01}=\ket{\phi^-_{01}}\bra{\phi^-_{01}}+e^{i\gamma_0}\ket{\phi^+_{01}}\bra{\phi^+_{01}}+e^{i\gamma_1}\ket{2}\bra{2}$.
    The condition $\gamma_0\neq0$ makes sure that $\ket{\Psi'}\notin\mS_0$.
    Further, $\mS_0$ in the form of Eq. (\ref{eq:1choice}) can be reformulated as
    \begin{equation}
        \ket{\Psi_0(\alpha)}=\sqrt{\lambda_1}\left(\ket{\phi^-_{01}\phi^{-*}_{01}}+\ket{\phi^+_{01}\phi^{+*}_{01}}\right)+e^{i\alpha}\sqrt{\lambda_2}\ket{22}.
    \end{equation}
    To sum up, we arrive at $\mV_\mT=\spanned\{\mS_0,\ket{\Psi'}\}=\spanned\left\{\ket{\phi^-_{01}\phi^{-*}_{01}},\ket{\phi^+_{01}\phi^{+*}_{01}},\ket{22}\right\}$. Thus, we find that $\mT=\mL\cap\mV_\mT$ takes the following form:
    \begin{equation}
        \ket{\Psi(\eta_0,\eta_1)}=\sqrt{\lambda_1}\left(\ket{\phi^-_{01}\phi^{-*}_{01}}+e^{i\eta_0}\ket{\phi^+_{01}\phi^{+*}_{01}}\right)+e^{i\eta_1}\sqrt{\lambda_2}\ket{22}.
    \end{equation}
    Namely, $\mT$ is a Schmidt hyperdisk in this situation. This is type I in Theorem 2.
    
    If $\mS_0$ takes the form in Eq. (\ref{eq:2choice}), then $\mT$ can take the type II form. The proof is sketched as follows.

\begin{itemize}
\item \textbf{Part 1}. In this part, we show that, there can exist two two-dimensional subhyperdisks of different Schmidt hyperdisks in $\mT$.
\item \textbf{Part 2}. In this part, we prove that, if there are two two-dimensional subhyperdisks of different Schmidt hyperdisks in $\mT$, then $\mT$ does not contain a third two-dimensional subhyperdisk. 
\item \textbf{Part 3}. In this part, we prove that, besides type I and type II, $\mT$ does not take other types. 
\end{itemize}
    
    \noindent [\textbf{Part 1}]:  To prove that there can exist two two-dimensional subhyperdisks of different Schmidt hyperdisks in $\mT$, we just need to check that the two hyperdisks as in Eq. (\ref{eq:type2}) can be contained in $\mT$. Here we denote the two hyperdisk  $\{\ket{\Psi_0(\alpha)}\}_\alpha$ and $\{\ket{\Psi_1(\beta)}\}_\beta$ in Eq. (\ref{eq:type2}) as $\mS_0^\prime$ and $\mS_1^\prime$, respectively. The corresponding hyperdisk bases $\mB^\prime_0$ and $\mB^\prime_1$ are
    \begin{equation}
    \begin{aligned}
        \mB^\prime_0&=\left\{\ket{\Phi^0_0}:=\ket{00},\ket{\Phi^0_1}:=\frac{1}{\sqrt{\lambda_1+\lambda_2}}\left(\sqrt{\lambda_1}\ket{11}+\sqrt{\lambda_2}\ket{22}\right)\right\},\\
        \mB^\prime_1&=\left\{\ket{\Phi^1_0}:=\ket{\phi^+_{01}\psi^+_{01}},\ket{\Phi^1_1}:=\frac{1}{\sqrt{\lambda_1+\lambda_2}}\left(\sqrt{\lambda_1}\ket{\phi^-_{01}\psi^-_{01}}+\sqrt{\lambda_2}\ket{22}\right)\right\},
    \end{aligned}
    \end{equation}
    where $\{\ket{\phi^+_{01}},\ket{\phi^-_{01}}\}$ and $\{\ket{\psi^+_{01}},\ket{\psi^-_{01}}\}$ are orthogonal bases of $\spanned\{\ket{0},\ket{1}\}$ and satisfy $\left.0<\left|\bra{0}\ket{\phi^+_{01}}\right|=\left|\bra{0}\ket{\psi^+_{01}}\right|<1\right.$.

    Then we prove the following statements.
    \begin{enumerate}
        \item $\mS^\prime_0$ and $\mS^\prime_1$ are contained in $\mL$; 
        \item $\mS^\prime_0$ and $\mS^\prime_1$ are contained in different Schmidt hyperdisks;
        \item The dimension of $\spanned\{\mS^\prime_0\cup\mS^\prime_1\}=\spanned\{\mB^\prime_0\cup\mB^\prime_1\}$ is equal to 3, such that $\dim(\mT)\equiv n=3$.
    \end{enumerate}
    
    \emph{Proof of 1}. The states in both $\mS^\prime_0$ and $\mS^\prime_1$ have marginal states in the form of Eq. (\ref{eq:margin}), so we have $\mS'_0\cup\mS'_1\subset \mL$.
    
    \emph{Proof of 2}. By Proposition \ref{pp:n}, $\mS_0^\prime$ and $\mS_1^\prime$ do not belong to a single three-dimensional hyperdisk. Further, $\mS_0^\prime$ is a subset to a Schmidt hyperdisk with basis $\{\ket{00},\ket{11},\ket{22}\}$, while $\mS_1^\prime$ is a subset to a Schmidt hyperdisk with basis $\{\ket{\phi^+_{01}\psi^+_{01}},\ket{\phi^-_{01}\psi^-_{01}},\ket{22}\}$.
    
    \emph{Proof of 3}. We check that only three states in $\mB^\prime_0\cup\mB^\prime_1$ are linearly independent, so we get $\dim(\mV^\prime)=3$. 
    
    To sum up, two two-dimensional subhyperdisks of different Schmidt hyperdisks can be contained in $\mT$.

    \noindent [\textbf{Part 2}]: Next, we suppose there are two two-dimensional subhyperdisks of different Schmidt hyperdisks in $\mT$. 
    Our main goal is then to prove there does not exist any other two-dimensional subhyperdisk in $\mT$. 
    
    As we have fixed $\mS_0$ in Eq. (\ref{eq:2choice}), we mainly focus on the second subhyperdisk $\mS_1$ of another Schmidt hyperdisk $\mS^\prime$. The Schmidt hyperdisk $\mS^\prime$ is written as
    \begin{equation}
        \begin{aligned}
        \ket{\Psi^\prime(\bm{\eta})}=\sqrt{\lambda_1}&\left(\ket{\phi^-_{01}\psi^-_{01}}+e^{i\eta_0}\ket{\phi^+_{01}\psi^+_{01}}\right)+e^{i\eta_1}\sqrt{\lambda_2}\ket{22},\\
        \ket{\phi^+_{01}}=\cos\frac{\theta_0}{2}\ket{0}+&\sin\frac{\theta_0}{2}e^{i\varphi_0}\ket{1},\quad\ket{\phi^-_{01}}=\sin\frac{\theta_0}{2}e^{-i\varphi_0}\ket{0}-\cos\frac{\theta_0}{2}\ket{1},\\
        \ket{\psi^+_{01}}=\cos\frac{\theta_1}{2}\ket{0}+&\sin\frac{\theta_1}{2}e^{i\varphi_1}\ket{1},\quad\ket{\psi^-_{01}}=\sin\frac{\theta_1}{2}e^{-i\varphi_1}\ket{0}-\cos\frac{\theta_1}{2}\ket{1}.
        \end{aligned}
    \end{equation}
    The parameters $\eta_0,\eta_1,\varphi_0$, and $\varphi_1$ have the domain $[0,2\pi)$. 
    The pair of parameters $(\theta_0,\theta_1)$ has the domain $[0,\pi]\times[0,\pi]\setminus\left\{(0,0),(\pi,\pi)\right\}$,
    because the condition $\mS\neq\mS^\prime$ forces $(\theta_0,\theta_1)\neq (0,0) \textnormal{ or } (\pi,\pi)$.
    In a similar way to Eq. (\ref{eq:12choice}), there are two possible situations of $\mS_1$:
    \begin{subequations}
        \begin{align}
        \ket{\Psi_1(\beta)}&=\sqrt{\lambda_1}\left(\ket{\phi^-_{01}\psi^-_{01}}+e^{i\eta}\ket{\phi^+_{01}\psi^+_{01}}\right)+\sqrt{\lambda_2}e^{i\beta}\ket{22},\label{eq:sub1choice}\\
        \ket{\Psi_1(\beta)}&=\sqrt{\lambda_1}\left(\ket{\phi^-_{01}\psi^-_{01}}+e^{i\beta}\ket{\phi^+_{01}\psi^+_{01}}\right)+\sqrt{\lambda_2}e^{i\eta}\ket{22},\label{eq:sub2choice}
        \end{align}
    \end{subequations}
    where $\eta$ is a constant real number. The first situation Eq. (\ref{eq:sub1choice}) follows $\mathrm{dim}(\mV_\mT)>3$, which is a contradiction to $n=3$. 
    Thus, we choose the second situation Eq. (\ref{eq:sub2choice}) as $\mS_1$ in the following. 
    
    Let us write down the hyperdisk basis of Eq. (\ref{eq:sub2choice}):
    \begin{equation}
            \left\{\ket{\Phi_0} := \ket{\phi^+_{01}\psi^+_{01}},\quad \ket{\Phi_1} :=\frac{1}{\sqrt{\lambda_1+\lambda_2}}\left(\sqrt{\lambda_1}\ket{\phi^-_{01}\psi^-_{01}}+\sqrt{\lambda_2}e^{i\eta}\ket{22}\right)\right\}.
    \end{equation}
    Then $\mV_\mT=\spanned\left\{\ket{00},\sqrt{\lambda_1}\ket{11}+\sqrt{\lambda_2}\ket{22},\ket{\Phi_0},\ket{\Phi_1}\right\}$. Furthermore, we define the orthogonal projections of $\ket{\Phi_0}$ and $\ket{\Phi_1}$ to  $\spanned\left\{\ket{00},\sqrt{\lambda_1}\ket{11}+\sqrt{\lambda_2}\ket{22}\right\}$:
    \begin{equation}
        \begin{aligned}
            \ket{\Phi^0_\perp} &= \ket{\Phi_0} - \bra{00}\ket{\Phi_0}\ket{00} - \frac{\sqrt{\lambda_1}\bra{11}+\sqrt{\lambda_2}\bra{22}}{\sqrt{\lambda_1+\lambda_2}}\ket{\Phi_0}\frac{\sqrt{\lambda_1}\ket{11}+\sqrt{\lambda_2}\ket{22}}{\sqrt{\lambda_1+\lambda_2}},\\
            \ket{\Phi^1_\perp} &= \ket{\Phi_1} - \bra{00}\ket{\Phi_1}\ket{00} - \frac{\sqrt{\lambda_1}\bra{11}+\sqrt{\lambda_2}\bra{22}}{\sqrt{\lambda_1+\lambda_2}}\ket{\Phi_1}\frac{\sqrt{\lambda_1}\ket{11}+\sqrt{\lambda_2}\ket{22}}{\sqrt{\lambda_1+\lambda_2}}.\\
        \end{aligned}
    \end{equation}
    Here we find that $\ket{\Phi^0_\perp}$ and $\ket{\Phi^1_\perp}$ must be collinear such that $\mathrm{dim}(\mV_\mT)=3$. 
    By using the collinearity, we derive $\theta_0=\theta_1=\theta$ and $\eta = 0$. Thus, $\mS_1$ can be parametrized by $(\theta,\varphi_0,\varphi_1)$, where $\theta\in(0,\pi)$. 
    Then both $\ket{\Phi^0_\perp}$ and $\ket{\Phi^1_\perp}$ are collinear to
    \begin{equation}\label{eq:balanced_basis}
        \ket{\Phi_\perp} = \sin\theta\sqrt{\lambda_1}\left(e^{-i\varphi_0}\ket{01}+e^{-i\varphi_1}\ket{10}\right)+\frac{\lambda_1\lambda_2}{\lambda_1+\lambda_2}(1-\cos\theta)\left(\frac{1}{\sqrt{\lambda_1}}\ket{11}-\frac{1}{\sqrt{\lambda_2}}\ket{22}\right).
    \end{equation}
    It follows that there is a one-to-one correspondence between $\ket{\Phi_\perp}$ and the tuple $(\theta,\varphi_0,\varphi_1)$. 
    Then we have $\mV_\mT=\spanned\left\{\ket{00},\sqrt{\lambda_1}\ket{11}+\sqrt{\lambda_2}\ket{22},\ket{\Phi_\perp}\right\}$, where $\ket{\Phi_\perp}$ is uniquely determined by the tuple $(\theta,\varphi_0,\varphi_1)$, and it in turn corresponds to $\mS_1$.
    Moreover, we find that $\mT$ contains at least two two-dimensional subhyperdisks of different Schmidt hyperdisks if and only if
    \begin{equation}\label{eq:mvt}\mV_\mT=\left\{\ket{00},\sqrt{\lambda_1}\ket{11}+\sqrt{\lambda_2}\ket{22},\ket{\Phi_\perp}\right\},\end{equation} 
    where $\ket{\Phi_\perp}$ is in Eq. (\ref{eq:balanced_basis}). 

    \noindent [\textbf{Part 3}]: In order to determine $\mV_\mT$ and get the full characterization of $\mT$, we choose an arbitrary state as $\ket{\Psi^\prime}\in\mL\setminus\mS_0$. If $\ket{\Psi^\prime}\in\mS$, it will lead to a type I structure of $\mT$. Otherwise, we have $\ket{\Psi^\prime}\in\mL\setminus\mS$, which can be parametrized as
    \begin{equation} \label{eq:unbalanced_exp}
        \ket{\Psi^\prime} = U_{(\theta,\varphi_0,\varphi_1,\eta)}\otimes\iden\left(\sqrt{\lambda_1}\ket{00}+\sqrt{\lambda_1}\ket{11}+\sqrt{\lambda_2}\ket{22}\right),
    \end{equation}
    where the block-diagonal local unitary matrix $U_{(\theta,\varphi_0,\varphi_1,\eta)}$ is written as
    \begin{equation} 
        \begin{aligned}
        U_{(\theta,\varphi_0,\varphi_1,\eta)}=\ket{\chi^+_{01}}\bra{0}+&e^{i\varphi_0}\ket{\chi^-_{01}}\bra{1}+e^{i\eta}\ket{2}\bra{2},\\
        \ket{\chi^+_{01}}=\cos\frac{\theta}{2}\ket{0}+\sin\frac{\theta}{2}e^{i\varphi_1}\ket{1} &,\quad \ket{\chi^-_{01}}=\sin\frac{\theta}{2}\ket{0}-\cos\frac{\theta}{2}e^{i\varphi_1}\ket{1}.
    \end{aligned}
    \end{equation}
    The phase parameters $\eta,\varphi_0$, and $\varphi_1$ have domain $[0,2\pi)$ while $\theta\in(0,\pi]$. 

    Next, define the orthogonal projection of $\ket{\Psi^\prime}$ to $\spanned\left\{\ket{00},\sqrt{\lambda_1}\ket{11}+\sqrt{\lambda_2}\ket{22}\right\}$
    \begin{equation}\label{eq:pw}
        \begin{aligned}
            \ket{\Psi^\prime_\perp}&=\ket{\Psi^\prime} - \bra{00}\ket{\Psi^\prime}\ket{00} - \frac{\sqrt{\lambda_1}\bra{11}+\sqrt{\lambda_2}\bra{22}}{\sqrt{\lambda_1+\lambda_2}}\ket{\Psi^\prime}\frac{\sqrt{\lambda_1}\ket{11}+\sqrt{\lambda_2}\ket{22}}{\sqrt{\lambda_1+\lambda_2}}\\
        &= \sqrt{\lambda_1}\sin\frac{\theta}{2}\left(e^{-i\varphi_0}\ket{01}+e^{-i\varphi_1}\ket{10}\right)-\frac{\lambda_1\lambda_2}{\lambda_1+\lambda_2}\left(\cos\frac{\theta}{2}+e^{i(\eta-\varphi_0-\varphi_1)}\right)\left(\frac{1}{\sqrt{\lambda_1}}\ket{11}-\frac{1}{\sqrt{\lambda_2}}\ket{22}\right).
        \end{aligned}
    \end{equation}
    Thus we have $\mV_\mT=\spanned\left\{\ket{00},\sqrt{\lambda_1}\ket{11}+\sqrt{\lambda_2}\ket{22},\ket{\Psi^\prime_\perp}\right\}$.
    However, the parametrization of Eq. (\ref{eq:pw}) has one degree of redundancy, since we can always find another state $\ket{\Psi^{\prime\prime}_\perp}$ which is collinear to $\ket{\Psi^\prime_\perp}$:
    \begin{equation}\label{eq:unbalanced_basis}
        \ket{\Psi^{\prime\prime}_\perp} = \sqrt{\lambda_1}\sin\frac{\theta^\prime}{2}\left(e^{-i\varphi^\prime_0}\ket{01}+e^{-i\varphi^\prime_1}\ket{10}\right)+\frac{\lambda_1\lambda_2}{\lambda_1+\lambda_2}\cos\frac{\theta^\prime}{2}\left(\frac{1}{\sqrt{\lambda_1}}\ket{11}-\frac{1}{\sqrt{\lambda_2}}\ket{22}\right),
    \end{equation}
    where $\theta^\prime\in(0,\pi)$ and $\varphi_0^\prime,\varphi_1^\prime\in(0,2\pi)$. 
    Notice that Eqs. (\ref{eq:balanced_basis}) and (\ref{eq:unbalanced_basis}) are essentially equivalent. 
    Namely, we arrive at $\mV_\mT=\left\{\ket{00},\sqrt{\lambda_1}\ket{11}+\sqrt{\lambda_2}\ket{22},\ket{\Psi^{\prime\prime}_\perp}\right\}$, which follows the same rule of Eq. (\ref{eq:mvt}).
    Thus, a second two-dimensional subhyperdisk of a different Schmidt hyperdisk  must be induced by $\ket{\Psi^\prime}$.
    Together with the conclusion of part 2 (i.e., a third two-dimensional subhyperdisk does not exist), we find that there does not exist a single state other than the two two-dimensional subhyperdisks.
    In other words, $\mT\setminus(\mS_0\cup\mS_1)=\emptyset$.
    Therefore, as long as we have fixed the first subhyperdisk $\mS_0$ of Schmidt hyperdisk $\mS$, every $\ket{\Psi^\prime}\in\mL\setminus\mS$ only leads $\mT$ to take the type II form. 
    
    In summary, there are only two types of $\mT$ as type I and type II in the partially degenerate case.

    \section{Proof for the structure of completely degenerate masker when \texorpdfstring{$n=d=3$}{n=d=3}}\label{app:B}

    In this section, we mainly discuss the type III structure. 
    Note that we may rearrange the phase parameters to simplify the expression in this section, 
    so notions such as $\varphi,\eta$ and $\omega$ sometimes do not directly represent the same phases in different formulas.
    In a similar way to Appendix \ref{app:A}, we also focus on deriving $\mV_\mT$ in order to get $\mT=\mL\cap\mV_\mT$ in the completely degenerate case, since $\mL$ is fixed as the set of three-dimensional maximally entangled states
    \begin{equation}
        \ket{\Psi(U)}=U\otimes\iden\left(\ket{00}+\ket{11}+\ket{22}\right),\quad \forall U\in\mathcal{U},
    \end{equation}
    where $\mathcal{U}$ is the set of three-dimensional unitary matrices acting on $\spanned\left\{\ket{0},\ket{1},\ket{2}\right\}$.
    
    First of all, we suppose there is a two-dimensional subhyperdisk $\mS_0$ of Schmidt hyperdisk $\mS$ in $\mT$. Without loss of generality,  $\mS_0$ is set as below in this section:
    \begin{equation}
        \ket{\Psi_0(\alpha)}=\ket{00}+e^{i\alpha}\left(\ket{11}+\ket{22}\right).
    \end{equation}

    Thus, the remaining degree of freedom is $\ket{\Psi^\prime}\in\mT\setminus\mS_0$.  Since the case that $\ket{\Psi^\prime}\in\mS$ leads $\mT$ to form a type I structure, we also set $\ket{\Psi^\prime}\in\mL\setminus\mS$ in the following context, in order to derive nontrivial situations.
    Then, $\ket{\Psi^\prime}$ can be parametrized as
    \begin{equation}\label{eq:mub}
        \begin{aligned}
        \ket{\Psi^\prime} =\ U_{(\theta,\nu_0,\nu_0,\varphi_0,\varphi_1,\omega_0,\omega_1,\eta)}\otimes\iden&\left(\ket{00}+\ket{11}+\ket{22}\right),\\
         U_{(\theta,\nu_0,\nu_0,\varphi_0,\varphi_1,\omega_0,\omega_1,\eta)} = \ket{\chi_0}\bra{0}+e^{i\varphi_1}&\ket{\chi_1}\bra{1} + e^{i(\varphi_1+\omega_1)}\ket{\chi_2}\bra{2},
        \end{aligned}
    \end{equation}
    where $\varphi_1,\omega_1\in[0,2\pi)$. $\left\{\ket{\chi_0},\ket{\chi_1},\ket{\chi_2}\right\}$ is an arbitrary orthogonal basis of $\spanned\left\{\ket{0},\ket{1},\ket{2}\right\}$
    \begin{equation}
        \begin{aligned}
            \ket{\chi_0}&=\cos\frac{\theta}{2}\ket{0}+e^{i\varphi_0}\sin\frac{\theta}{2}\cos\frac{\nu_0}{2}\ket{1}+e^{i(\varphi_0+\omega_0)}\sin\frac{\theta}{2}\sin\frac{\nu_0}{2}\ket{2},\\
            \ket{\chi_1}&=\sin\frac{\theta}{2}\cos\frac{\nu_1}{2}\ket{0}+e^{i\varphi_0}\left(e^{i\eta}\sin\frac{\nu_0}{2}\sin\frac{\nu_1}{2}-\cos\frac{\theta}{2}\cos\frac{\nu_0}{2}\cos\frac{\nu_1}{2}\right)\ket{1}\\
            &-e^{i(\varphi_0+\omega_0)}\left(e^{i\eta}\cos\frac{\nu_0}{2}\sin\frac{\nu_1}{2}+\cos\frac{\theta}{2}\sin\frac{\nu_0}{2}\cos\frac{\nu_1}{2}\right)\ket{2},\\
            \ket{\chi_2}&=\sin\frac{\theta}{2}\sin\frac{\nu_1}{2}\ket{0}-e^{i\varphi_0}\left(e^{i\eta}\sin\frac{\nu_0}{2}\cos\frac{\nu_1}{2}+\cos\frac{\theta}{2}\cos\frac{\nu_0}{2}\sin\frac{\nu_1}{2}\right)\ket{1}\\
            &+e^{i(\varphi_0+\omega_0)}\left(e^{i\eta}\cos\frac{\nu_0}{2}\cos\frac{\nu_1}{2}-\cos\frac{\theta}{2}\sin\frac{\nu_0}{2}\sin\frac{\nu_1}{2}\right)\ket{2},
        \end{aligned}
    \end{equation}
    where $\theta\in(0,\pi]$, $\nu_0,\nu_1\in[0,\pi]$ and $\varphi_0,\omega_0,\in[0,2\pi)$.
    Furthermore, we expand Eq. (\ref{eq:mub}) as below:
    \begin{equation}\label{eq:appcomplex}
        \begin{aligned}
            \ket{\Psi^\prime}&=\cos\frac{\theta}{2}\ket{00}+\sin\frac{\theta}{2}\left(e^{i\varphi_0}\ket{\phi^+_{12}}\ket{0}+e^{i\varphi_1}\ket{0}\ket{\psi^+_{12}}\right)\\
            &+e^{i(\varphi_0+\varphi_1)}\left(e^{i\eta}\ket{\phi^-_{12}\psi^-_{12}}-\cos\frac{\theta}{2}\ket{\phi^+_{12}\psi^+_{12}}\right),
        \end{aligned}
    \end{equation}
    where the orthogonal bases $\left\{\ket{\phi^-_{12}},\ket{\phi^+_{12}}\right\}$ and $\left\{\ket{\psi^-_{12}},\ket{\psi^+_{12}}\right\}$ are written as
    \begin{equation}
        \begin{aligned}
            \ket{\phi^+_{12}}=\cos\frac{\nu_0}{2}\ket{1}+\sin\frac{\nu_0}{2}e^{i\omega_0}\ket{2},\quad&\ket{\phi^-_{12}}=\sin\frac{\nu_0}{2}\ket{1}-\cos\frac{\nu_0}{2}e^{i\omega_0}\ket{2},\\
            \ket{\psi^+_{12}}=\cos\frac{\nu_1}{2}\ket{1}+\sin\frac{\nu_1}{2}e^{i\omega_1}\ket{2},\quad&\ket{\psi^-_{12}}=\sin\frac{\nu_1}{2}\ket{1}-\cos\frac{\nu_1}{2}e^{i\omega_1}\ket{2}.\\
        \end{aligned}
    \end{equation}
    
    To simplify Eq. (\ref{eq:appcomplex}), we construct $U_{12}$ as
    \begin{equation}
        U_{12}=\ket{0}\bra{0}+\ket{1}\bra{\phi^+_{12}}+\ket{2}\bra{\phi^-_{12}}.
    \end{equation}
    Recalling the property that the maximally entangled states remain fully entangled under local unitaries, we apply $U_{12}\otimes U_{12}^{*}$ on the set of target states $\mT$:
    \begin{equation}
        U_{12}\otimes U_{12}^{*}\ket{\Psi_0(\alpha)}\equiv\ket{\Psi_0(\alpha)}, \quad U_{12}\otimes U_{12}^{*}\ket{\Psi^\prime} = \ket{\Psi^{\prime\prime}}.
    \end{equation}
    Notice that the unitary $U_{12}\otimes U^*_{12}$ does not have an effect on $\mS_0$. Thus, we can rewrite a simplified form $\ket{\Psi^{\prime\prime}}$ of $\ket{\Psi^\prime}$ while $\mS_0$ remains untouched
    \begin{equation}
        \begin{aligned}
        \ket{\Psi^{\prime\prime}}&=\cos\frac{\theta}{2}\ket{00}+\sin\frac{\theta}{2}\left(e^{i\varphi_0}\ket{10}+e^{i\varphi_1}\ket{0}\ket{\psi^{\prime+}_{12}}\right)\\
        &+e^{i(\varphi_0+\varphi_1)}\left(e^{i\eta}\ket{2}\ket{\psi^{\prime-}_{12}}-\cos\frac{\theta}{2}\ket{1}\ket{\psi^{\prime+}_{12}}\right),\\
        \ket{\psi^{\prime+}_{12}}=&\cos\frac{\nu}{2}\ket{1}+\sin\frac{\nu}{2}e^{i\omega}\ket{2},\quad\ket{\psi^{\prime-}_{12}}=\sin\frac{\nu}{2}\ket{1}-\cos\frac{\nu}{2}e^{i\omega}\ket{2}.
        \end{aligned}
    \end{equation}
    The parameter $\theta$ has domain $(0,\pi]$ while $\nu\in[0,\pi]$. The phase parameters $\varphi_0,\varphi_1,\omega$, and $\eta$ have domain $[0,2\pi)$.

    After parametrization, we characterize $\mV_\mT$ by constructing the orthogonal projection of $\ket{\Psi^{\prime\prime}}$ to $\spanned\left\{\ket{00},\ket{11}+\ket{22}\right\}$
    \begin{equation}
        \ket{\Psi^{\prime\prime}_\perp}=\ket{\Psi^{\prime\prime}}-\bra{00}\ket{\Psi^{\prime\prime}}\ket{00}-\frac{\bra{11}+\bra{22}}{\sqrt{2}}\ket{\Psi^{\prime\prime}}\frac{\ket{11}+\ket{22}}{\sqrt{2}},
    \end{equation}
    Then we show that $\ket{\Psi^{\prime\prime}}$ and $\ket{\Psi_\perp^{\prime\prime}}$ have one-to-one correspondence in order to prove the existence of the type III structure.

    We now categorize different $\ket{\Psi^{\prime\prime}}$ by splitting the domain of $\nu$ into three parts.
    
    \noindent \textbf{(1)} For $\nu=0$, $\ket{\Psi^{\prime\prime}}$ is parametrized by the tuple $(\theta,\varphi_0,\varphi_1,\eta)$
    \begin{equation}
            \ket{\Psi^{\prime\prime}}=\cos\frac{\theta}{2}\ket{00}+\sin\frac{\theta}{2}\left(e^{i\varphi_0}\ket{10}+e^{i\varphi_1}\ket{01}\right)-e^{i(\varphi_0+\varphi_1)}\left(\cos\frac{\theta}{2}\ket{11}+e^{i\eta}\ket{22}\right),
    \end{equation}
    which leads us to the orthogonal projection
    \begin{equation}
        \ket{\Psi_\perp^{\prime\prime}}=\sin\frac{\theta}{2}\left(e^{-i\varphi_0}\ket{01}+e^{-i\varphi_1}\ket{10}\right)-\frac{1}{2}\left(\cos\frac{\theta}{2}-e^{i\eta}\right)\left(\ket{11}-\ket{22}\right).
    \end{equation}
    Notice that this formula follows the same pattern of Eq. (\ref{eq:pw}) in Appendix \ref{app:A} while we set $\lambda_1=\lambda_2=1$.  
    Similarly, $\mT$ takes the type II form for $\nu=0$.

    \noindent \textbf{(2)} For $\nu\in(0,\pi)$, $\ket{\Psi^{\prime\prime}}$ is parametrized by the tuple $(\theta,\nu,\varphi_0,\varphi_1,\omega,\eta)$
    \begin{equation}
        \begin{aligned}
            &\ket{\Psi^{\prime\prime}}=\cos\frac{\theta}{2}\ket{00}+\sin\frac{\theta}{2}\left(e^{i\varphi_0}\ket{10}+e^{i\varphi_1}\cos\frac{\nu}{2}\ket{01}+e^{i(\varphi_1+\omega)}\sin\frac{\nu}{2}\ket{02}\right)\\
            &+e^{i(\varphi_0+\varphi_1)}\left[e^{i\eta}\left(\sin\frac{\nu}{2}\ket{21}-e^{i\omega}\cos\frac{\nu}{2}\ket{22}\right)-\cos\frac{\theta}{2}\left(\cos\frac{\nu}{2}\ket{11}+e^{i\omega}\sin\frac{\nu}{2}\ket{12}\right)\right],\\
        \end{aligned}
    \end{equation}
    which leads us to a unique orthogonal projection $\ket{\Psi^{\prime\prime}_\perp}$
    \begin{equation}
        \begin{aligned}
            \ket{\Psi^{\prime\prime}_{\perp}}&=\sin\frac{\theta}{2}\left(\ket{10}+e^{i(\varphi_1-\varphi_0)}\cos\frac{\nu}{2}\ket{01}+e^{i(\varphi_1-\varphi_0+\omega)}\sin\frac{\nu}{2}\ket{02}\right)\\
            &+e^{i\varphi_1}\sin\frac{\nu}{2}\left(e^{i\eta}\ket{21}-e^{i\omega}\cos\frac{\theta}{2}\ket{12}\right)-\frac{1}{2}e^{i\varphi_1}\cos\frac{\nu}{2}\left(\cos\frac{\theta}{2}-e^{i(\eta+\omega)}\right)\left(\ket{11}-\ket{22}\right).
        \end{aligned}
    \end{equation}
    This one-to-one correspondence between $\ket{\Psi^{\prime\prime}}$ and $\ket{\Psi^{\prime\prime}_\perp}$ ensures that there does not exist a second single state in $\mT$. Thus, we arrive at the type III form for $\nu\in(0,\pi)$.
    
    \noindent \textbf{(3)} For $\nu=\pi$, $\ket{\Psi^{\prime\prime}}$ is parametrized by the tuple $(\theta,\varphi_0,\varphi_1,\eta)$
    \begin{equation}
        \ket{\Psi^{\prime\prime}}=\cos\frac{\theta}{2}\ket{00}+\sin\frac{\theta}{2}\left(e^{i\varphi_0}\ket{10}+e^{i\varphi_1}\ket{01}\right)-e^{i(\varphi_0+\varphi_1)}\left(\cos\frac{\theta}{2}\ket{12}+e^{i\eta}\ket{21}\right),
    \end{equation}
    which also leads us to a unique orthogonal projection $\ket{\Psi^{\prime\prime}_\perp}$
    \begin{equation}
        \ket{\Psi^{\prime\prime}_\perp}=\sin\frac{\theta}{2}\left(e^{-i\varphi_1}\ket{10}+e^{-i\varphi_0}\ket{01}\right)-\left(\cos\frac{\theta}{2}\ket{12}+e^{i\eta}\ket{21}\right).
    \end{equation}
    It follows that $\mT$ takes the type III form for $\nu=\pi$.

    In summary, we have obtained all of the possible states $\ket{\Psi^{\prime}}\in\mT\setminus\mS_0$. Therefore, we can conclude that there are only three types of $\mT$ in the completely degenerate case.
    \twocolumngrid
    \bibliography{masking}

\end{document}